\newcommand{\diff}{\ensuremath\mathrm{d}}
\newcommand*{\Tr}{\operatorname{Tr}}
\providecommand{\myvec}[1]{\ensuremath{\boldsymbol{#1}}}
\providecommand{\gg}{\ensuremath{\myvec{g}}}
\providecommand{\xx}{\ensuremath{\myvec{x}}}
\providecommand{\yy}{\ensuremath{\myvec{y}}}
\providecommand{\aalpha}{\ensuremath{\myvec{\alpha}}}
\providecommand{\bbeta}{\ensuremath{\myvec{\beta}}}
\providecommand{\ggamma}{\ensuremath{\myvec{\gamma}}}
\providecommand{\ttheta}{\ensuremath{\myvec{\theta}}}
\providecommand{\ssigma}{\ensuremath{\myvec{\sigma}}}
\providecommand{\calD}{\ensuremath{\mathcal{D}}}
\providecommand{\calE}{\ensuremath{\mathcal{E}}}
\providecommand{\calG}{\ensuremath{\mathcal{G}}}
\providecommand{\calH}{\ensuremath{\mathcal{H}}}
\providecommand{\calL}{\ensuremath{\mathcal{L}}}
\providecommand{\calS}{\ensuremath{\mathcal{S}}}
\providecommand{\calT}{\ensuremath{\mathcal{T}}}
\providecommand{\calX}{\ensuremath{\mathcal{X}}}
\providecommand{\calY}{\ensuremath{\mathcal{Y}}}
\providecommand{\bbC}{\ensuremath{\mathbb{C}}}
\providecommand{\bbI}{\ensuremath{\mathbb{I}}}
\providecommand{\bbR}{\ensuremath{\mathbb{R}}}
\providecommand{\bbZ}{\ensuremath{\mathbb{Z}}}
\providecommand{\fraks}{\ensuremath{\mathfrak{s}}}
\providecommand{\fraku}{\ensuremath{\mathfrak{u}}}
\newcommand{\ie}{\textit{i.e.}\ }
\newcommand{\eg}{\textit{e.g.}\ }
\newtheorem{theorem}{Theorem}
\newtheorem{proposition}[theorem]{Proposition}
\newcommand{\SWAP}{\ensuremath{\operatorname{SWAP}}}
\newcommand{\CNOT}{\ensuremath{\operatorname{CNOT}}}
\DeclareMathOperator{\SO}{\ensuremath{SO}}
\renewcommand{\O}{\ensuremath{\operatorname{O}}}
\DeclareMathOperator{\so}{\ensuremath{\mathfrak{so}}}
\DeclareMathOperator{\SU}{\ensuremath{SU}}
\DeclareMathOperator{\su}{\ensuremath{\mathfrak{su}}}
\DeclareMathOperator{\ad}{\ensuremath{ad}}
\DeclareMathOperator{\Ad}{\ensuremath{Ad}}
\renewcommand{\gg}{\boldsymbol{g}}
\newcommand{\fu}{Dahlem Center for Complex Quantum Systems, Freie Universit\"{a}t Berlin, 14195 Berlin, Germany}
\newcommand{\porsche}{Porsche Digital GmbH, 71636 Ludwigsburg, Germany}
\newcommand{\hzb}{Helmholtz-Zentrum Berlin f{\"u}r Materialien und Energie, 14109 Berlin, Germany}
\newcommand{\hhi}{Fraunhofer Heinrich Hertz Institute, 10587 Berlin, Germany}
\begin{document}

\title{Exploiting symmetry in variational quantum machine learning}
\date{May 12, 2022}

\author{Johannes Jakob Meyer}
\affiliation{\fu}

\author{Marian Mularski}
\affiliation{\fu}
\affiliation{\porsche}

\author{Elies Gil-Fuster}
\affiliation{\fu}
\affiliation{\hhi}

\author{Antonio Anna Mele}
\affiliation{\fu}

\author{Francesco Arzani}
\affiliation{\fu}

\author{Alissa Wilms}
\affiliation{\fu}
\affiliation{\porsche}

\author{Jens Eisert}
\affiliation{\fu}
\affiliation{\hzb}
\affiliation{\hhi}

\begin{abstract}
    Variational quantum machine learning is an extensively studied application of near-term quantum computers. 
    The success of variational quantum learning models crucially depends on finding a suitable parametrization of the model that encodes an inductive bias relevant to the learning task. 
    However, precious little is known about guiding principles for the construction of suitable parametrizations. 
    In this work, we holistically explore when and how symmetries of the learning problem can be exploited to construct quantum learning models with outcomes invariant under the symmetry of the learning task. 
    Building on tools from representation theory, we show how a standard gateset can be transformed into an equivariant gateset that respects the symmetries of the problem at hand through a process of gate symmetrization. 
    We benchmark the proposed methods on two toy problems that feature a non-trivial symmetry and observe a substantial increase in generalization performance. 
    As our tools can also be applied in a straightforward way to other variational problems with symmetric structure, we show how equivariant gatesets can be used in variational quantum eigensolvers.
\end{abstract}

\maketitle

The advent of quantum devices that come close to outperforming classical computers in certain computational tasks~\cite{arute2019quantum,wu_strong_2021} has sparked a wealth of investigations into the capabilities of \emph{noisy intermediate-scale quantum (NISQ)} devices~\cite{preskill2018quantum}. 
These endeavors are aimed at exploiting the computational power of quantum computers without quantum error correction that make use of relatively short quantum circuits. A particular emphasis is put on \emph{hybrid} quantum-classical approaches that use the quantum device to implement subroutines in otherwise largely classical algorithms~\cite{mcclean2016theory}.

Important areas of applications intended for these hybrid approaches include problems in quantum chemistry, classical
combinatorial optimization and machine learning~\cite{cerezo2021variational,BhartiDoi}.
\emph{Symmetries} have always played a very important role in the analysis of physical systems, due to their frequent admitting of conserved quantities. Combining this with Noether's theorem informs us that conserved quantities correspond to symmetries of the underlying system. In quantum chemistry, for example, solutions to ground state problems or electronic structure calculations have to respect conserved quantities such as the global particle number or the parity of fermion number and hence admit non-trivial symmetries. 

The usefulness of symmetries is not limited to the domain of physical systems, however, and their role in the context of classical machine learning must not be understated. Symmetries have been pivotal in the development of successful models for image recognition that sparked a revolution in learning with artificial neural networks~\cite{lecun1989backpropagation}. More recent breakthroughs, like the development of the transformer model~\cite{vaswani2017attention}, are also linked to a better understanding of the underlying symmetries of the learning task. Consequently, the rigorous treatment of symmetries in machine learning has recently culminated in the creation of the sub-field of \emph{geometric deep learning}~\cite{bronstein2021geometric}.

\begin{figure}
    \centering
    \includegraphics{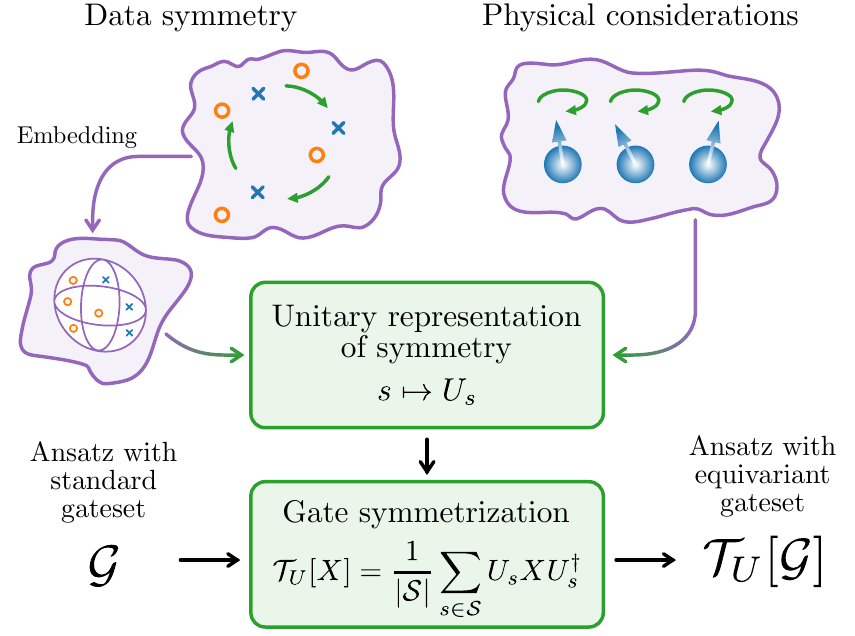}
    \caption{A unitary representation of a symmetry group $\calS$ can arise from data symmetries when the data points are suitably encoded or alternatively from physical considerations of a variational problem. We can use such a representation to replace gates with their equivariant counterparts and thus symmetrize generic ansätze used in the construction of learning models and variational algorithms.}
    \label{fig:summary}
\end{figure}

In this work, we explore how symmetries of a learning task can be exploited to create quantum learning models whose output is invariant under symmetry transformations on the level of the data. The fact that classical data has to be embedded into the Hilbert space of a quantum system makes it difficult to directly transfer ideas from the classical world to the construction of variational quantum learning models. As our main contribution, we show how to embed data in a way that enables us to build invariant learning models for which we give both a general blueprint and mathematical tools for their design.

In Sec.~\ref{sec:symmetric_structures}, we exhibit data embeddings that give rise to meaningful representations of practically relevant data symmetries on the level of the Hilbert space, both for discrete and continuous symmetries. 
The trainable parts of variational quantum learning models are usually constructed using gates from a standard gateset.
In Sec.~\ref{sec:gateSymmetrization} we show how the representation on the level of the Hilbert space can be exploited to symmetrize the generators of such a gateset to give rise to an equivariant gateset. We especially detail which pitfalls should be avoided when constructing an equivariant gateset.
In Sec.~\ref{sec:invariant_models} we show how these building blocks can be combined to construct variational quantum data re-uploading models that make invariant predictions and thus include a meaningful inductive bias.

To elaborate on how our approaches are applied and to showcase the increase in performance, we conduct a number of numerical experiments in Sec.~\ref{sec:numerics}. We consider the 
classification of games of tic-tac-toe as a paradigmatic learning task with non-trivial symmetry and show a significant increase in generalization performance that even generically holds for random constructions in the equivariant gateset. We further consider a learning task with similar symmetry properties that is inspired by a problem from the automotive industry and which can, as soon as the quantum computational resources become available, be scaled up to address the actual task. For this task, concerned with the classification of the criticality of driving scenarios for autonomous vehicles, we reach the same conclusions as for the tic-tac-toe example. We also argue that the idea of equivariant gatesets itself is not limited to applications in quantum machine learning. This is why we additionally provide numerical investigations into ground state problems of the transverse-field Ising model, the Heisenberg model and a toy model based on the longitudinal-transverse-field Ising model that exhibits the same symmetries as the tic-tac-toe problem. For these kinds of problems, equivariant ansätze usually help, but the benefit is less apparent than in the case of quantum machine learning models.

There is already a substantial amount of literature on the application of symmetries in the context of variational quantum algorithms. Explicit constructions of ansätze and gadgets for quantum chemistry problems can be found in Refs.~\cite{liu2019variational,gard2020efficient,setia2020reducing,anselmetti2021local,arrazola2021universal,vogt2021preparing,zhang2021shallow-circuit,barron2021preserving}. Another approach has been chosen by the authors of Refs.~\cite{seki2020symmetry-adapted,seki2022spatial} where symmetries were imposed through classical post-processing. An approach in the same direction as the one we outline in this work has been proposed in Ref.~\cite{herasymenko2021diagrammatic}, where it has been suggested to simply remove generators that violate the symmetry or fix the parameters of the gates. 

Here, we are reaching the same goal via a distinctly different route, as we propose to alter the set of generators used in the construction of the ansatz. Further works that treat symmetry-enhanced variational methods are found in Refs.~\cite{selvarajan2022variational,lyu2022symmetry,ParkBrokenSymm}. Another possibility outside of altering the ansatz itself to enforce symmetries is to enforce them through the optimizer via the use of additional terms in the cost function that penalize asymmetric states as in Refs.~\cite{bookatz2015error,ryabinkin2018symmetry,kuroiwa2021penalty}.
In a different direction, Ref.~\cite{mernyei2021equivariant} proposes a quantum model that closely follows the geometric deep learning blueprint~\cite{bronstein2021geometric} by having first a symmetry equivariant quantum \enquote{layer}, followed by a symmetry invariant classical \enquote{aggregator}.
In particular, they propose a model for graph-based learning tasks, so the symmetry they consider is the permutation (or relabeling) of the graph nodes.
Also specifically for graphs, Ref.~\cite{verdon2019quantum} included a recipe for building permutation invariant models inspired by \emph{graph convolutional neural networks}~\cite{bronstein2021geometric}. In Ref.~\cite{zheng2021speeding}, the authors look at the construction of group-equivariant ansatz states to learn quantum states. 
Ref.~\cite{glick2022covariant} proposes a construction of covariant quantum kernels that evaluate data that already comes with a group structure.

\section{Preliminaries}\label{sec:preliminaries}

\subsection*{Variational quantum algorithms}

\emph{Variational quantum algorithms (VQAs)}~\cite{cerezo2021variational} are a major research direction in the study of the capabilities of near-term quantum devices to solve practically relevant problems. In these hybrid quantum-classical approaches~\cite{mcclean2016theory}, a quantum device uses a \emph{parametrized quantum circuit (PQC)} as an \emph{ansatz} to prepare a parametrized quantum state vector $\ket{\psi(\ttheta)}$. After the preparation of the state, measurements are performed in order to estimate the expectation values of a set of observables $\{ O_1, O_2, \dots \}$. These expectation values are then used to compute a \emph{cost function} $C(\ttheta, \langle O_1 \rangle, \langle O_2 \rangle, \dots )$ that encodes the problem of interest. Particularly prominent applications are the \emph{variational quantum eigensolver (VQE)} where the cost function is given by the expectation value of a Hamiltonian $H$ whose ground state is to be approximated and the \emph{quantum approximate optimization algorithm (QAOA)} which combines a specific ansatz construction with VQE to approximate solutions to classical optimization problems. As variational quantum algorithms offer an extremely flexible paradigm to tackle problems, such algorithms have been proposed for a wide variety of different problems beyond that~\cite{BhartiDoi}.

\subsection*{Variational quantum learning models}
With the current success of machine learning, especially neural-network based deep learning, the interest into possible gains from applying quantum computers to learning problems has come into focus, sparking the field of \emph{quantum machine learning (QML)}. Variational approaches as outlined above can be used to construct \emph{variational quantum learning models (VQLMs)}, see e.g.\ Refs.~\cite{mitarai2018quantum,benedetti2019parameterized,schuld2020circuit-centric}. In such circuit-based models, the ansatz state does not only depend on variational parameters but also on the input data, and predictions are encoded in expectation values of observables
\begin{align}
    y(\xx) =\langle{\psi(\ttheta, \xx)} | O | {\psi(\ttheta, \xx)} \rangle.
\end{align}
Because of the no-cloning theorem, it is important for the expressivity of quantum learning models to embed the data into the quantum circuit multiple times, a process dubbed \emph{data re-uploading}~\cite{perez-salinas2020data,schuld2021effect,jerbi2022quantum}. In the following, we will therefore consider VQLMs where a fixed data-embedding unitary $U(\xx)$ is interleaved with parametrized quantum circuits as
\begin{align}\label{eqn:re-uploading-model}
    \ket{\psi(\ttheta, \xx)} =  W_L(\ttheta) U(\xx) \dots W_1(\ttheta) U(\xx) W_0(\ttheta) \ket{\psi_0}.
\end{align}

\subsection*{Symmetry groups}
Symmetries are commonly captured by \emph{groups}. A group $\calS$ is a set of objects $s \in \calS$ together with a composition rule $\circ$ -- usually called \enquote{multiplication} -- so that $s_1 \circ s_2 \in \calS$ for all $s_1, s_2 \in \calS$. The multiplication is also required to be associative, \ie $(s_1 \circ s_2) \circ s_3 = s_1 \circ (s_2 \circ s_3)$. Furthermore, there is an identity element $e$ so that $e \circ s = s \circ e = s$ for all $s \in \calS$ and that for all $s \in \calS$ there exists a unique inverse element $s^{-1}$ so that $s \circ s^{-1} = s^{-1} \circ s = e$. A simple example of a group is $\bbZ_2 = \{ 0, 1\}$ where the composition is given by addition modulo 2.
If a restriction of $\calS$ to some subset is itself a group under the composition rule of $\calS$, we call this subset a \emph{subgroup}.

The concept of groups also extends to continuous sets in the form of \emph{Lie groups}. A real Lie group is a group $\calS$ which is also a smooth manifold and the composition rule and inversion are smooth maps. One example of a (non-compact) Lie group is the set of all invertible linear operators $\operatorname{GL}_n(\bbC)$. Lie groups that are a subgroup of $\operatorname{GL}_n(\bbC)$ and can hence be expressed in terms of matrices are called \emph{matrix Lie groups} and we will focus on these in the following. An example of a matrix Lie group are the unitary matrices acting on $\bbC^{n}$, denoted as $\operatorname{U}(n)$.

Lie groups are intimately related to \emph{Lie algebras} which can be seen as the generators of elements of the Lie group. For a matrix Lie group $\calS$, we can define the associated algebra as
\begin{align}
    \fraks = \operatorname{Lie}(\calS) = \{ G \colon \exp(tG) \in \calS \text{ for all } t\in \bbR \}.
\end{align}
This is motivated by the fact that for two group elements $s_1, s_2 \in \calS$ we can understand the group multiplication on the level of the Lie algebra via the Baker-Campbell-Hausdorff formula
\begin{align}
    s_1 s_2 &= e^{G_1}e^{G_2} = e^{G_1 + G_2 + \frac{1}{2}[G_1, G_2]+ \dots}.
\end{align}
For $s_1$ and $s_2$ close to the identity element, we have that the norm of $G_1$ and $G_2$ are small in which case their commutator gives the first correction. It is actually part of the formal definition of a Lie algebra and is called the Lie bracket. 

A \emph{representation} of a group is a map $U\colon \calS \to \operatorname{Aut}(V), s \mapsto U_s$ that is compatible with the composition rule: $U_{s_1 \circ s_2} = U_{s_1} U_{s_2}$. If the space $V$ is equipped with a scalar product and the linear maps $U_s$ are unitary, we call this a unitary representation.

\section{Symmetries induced by data embeddings}\label{sec:symmetric_structures}

Symmetries play an important role all across physics and problems that are of interest in near-term applications are no different in this respect. A symmetry group $\calS$ acts on an $n$-qubit Hilbert space $\calH$ through a unitary representation $U_s$ for $s \in \calS$.

In ground state problems -- as they are encountered in the variational quantum eigensolver~\cite{peruzzo2014variational} -- symmetries are usually derived from physical considerations. An example of such a symmetry, which is encountered for example in the Heisenberg-model, is that the energy of a state does not change if all spins are flipped, i.e.\ when the operator $X^{\otimes n}$ is applied. This can be understood as a representation of the symmetry group $\bbZ_2$. Further symmetries include joint unitary transformations of all spins ($\operatorname{SU}(2)$) or translation symmetry ($\bbZ_n$). Such symmetries have already been extensively studied in the context of VQE, with a particular focus towards building symmetry-preserving ansätze for quantum chemistry applications~\cite{liu2019variational,gard2020efficient,setia2020reducing,anselmetti2021local,arrazola2021universal,vogt2021preparing,zhang2021shallow-circuit,barron2021preserving}. 

\begin{figure*}
    \centering
    \includegraphics{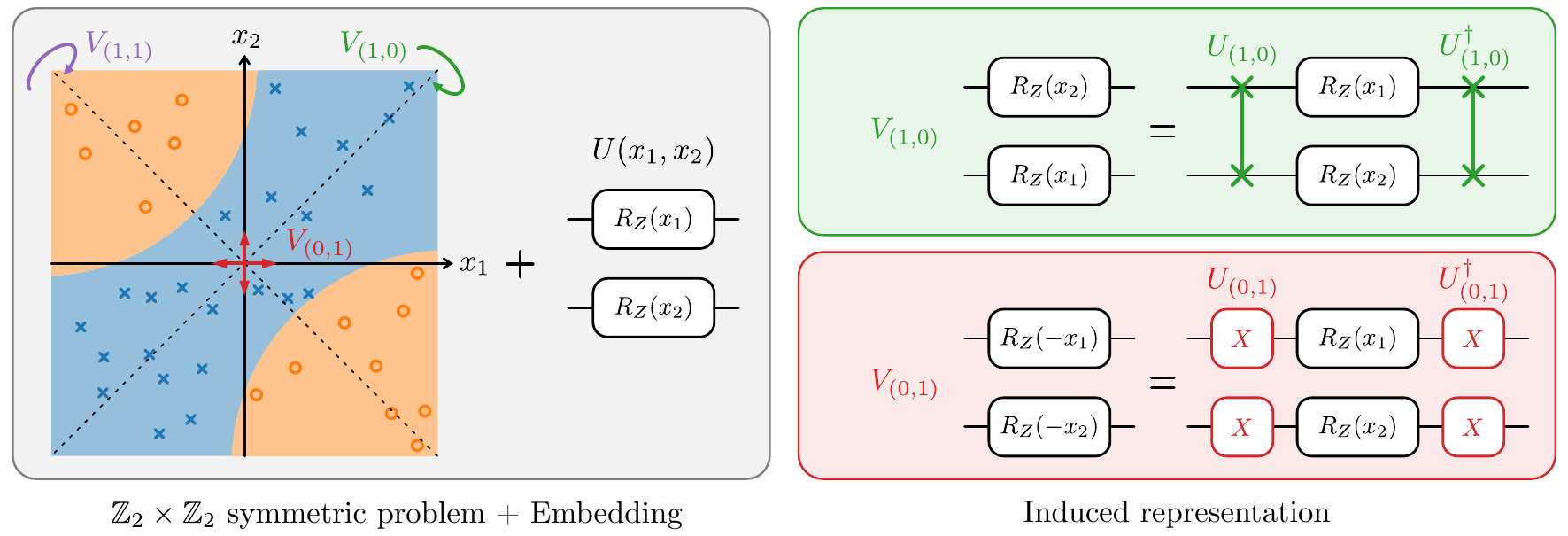}
    \caption{Simple example of a classification problem with two features $(x_1, x_2)$ that has a ${\bbZ_2 \times \bbZ_2}$ symmetry generated by a flip about the axis $x_1 = x_2$ (green) and inversion about the center (red). Combining both operations yields a flip about the axis $x_1 = -x_2$ (purple). Together with the simple qubit-wise embedding of the coordinates, these induce a representation of ${\bbZ_2 \times \bbZ_2}$ generated by a swap operation of the two qubits and a simultaneous application of Pauli $X$ on the two qubits.}
    \label{fig:vqlm_example_Z2}
\end{figure*}

But ground state problems are not the only problems of interest in near-term applications that can have a symmetric structure. 
Indeed, many learning problems encountered in the context of quantum machine learning also have some symmetry: A prototypical example is the labeling of images. Even if we move all the pixels of a cat photo to the right or rotate it, the photo still depicts a cat. More formally, consider a task where a prediction $y \in \calY$ should be associated to data points $\xx \in \calX$. We say that the prediction is invariant under a symmetry group $\calS$ with representation $V_s\colon \calS \to \operatorname{Aut}(\calX)$ if 
\begin{align}\label{eqn:invariance_of_prediction}
    y(V_s[\xx]) = y(\xx)  \text{ for all } \xx \in \calX \text{ and } s \in \calS.
\end{align}
Notions like this have recently sparked the subfield of \emph{geometric deep learning}~\cite{bronstein2021geometric} which abstractly reasons about these symmetries and their implementation in learning models, particularly in artificial neural networks. The geometric reasoning also elucidates the early success of convolutional neural networks~\cite{lecun1989backpropagation} which use building blocks that respect the translation symmetry present in typical image classification tasks and the success of transformer models~\cite{vaswani2017attention}. It is also intuitive to expect that building symmetry-invariant models comes with advantages: By definition, a symmetry-invariant model does not see any difference between data points that are related by a symmetry transformation which effectively reduces the number of possible predictions the learning model can make. Such a model has lower complexity and is thus expected to be easier to train. Furthermore, these models only produce predictions that respect the symmetry already present in the learning problem and are thus expected to generalize better to unseen data.

In the following, we will show how to construct quantum circuits that embed classical data in a way that induces a meaningful unitary representation of the symmetry on the level of the Hilbert space. These embeddings are then used as building blocks in the construction of symmetry-invariant data re-uploading models.

\subsection*{Introductory example}
We first consider an introductory example that is given in Figure~\ref{fig:vqlm_example_Z2}. 
Assume we have data points $\xx = (x_1, x_2) \in \bbR^2$ and the prediction -- for example a classification -- has the following symmetries
\begin{align}
    y(x_1, x_2) = y(x_2, x_1) = y(-x_1, -x_2)
\end{align}
which correspond to an exchange of the coordinates and a simultaneous sign flip. The associated symmetry group, famously known as Klein's four-group, is given by
\begin{align}
    \bbZ_2 \times \bbZ_2 = \{(0, 0), (1,0), (0,1), (1, 1)\}
\end{align}
with the group operation being entry-wise addition modulo 2. In this abstract definition of the symmetry group, the first component could be understood as the boolean answer to the question \enquote{do we exchange?}, and the second component as the boolean answer to the question \enquote{do we flip?}. 
The representation on the level of the data is given by
\begin{IEEEeqnarray}{ll}
    V_{(0,0)} = \begin{bmatrix}1 & 0\\0 & 1\end{bmatrix}, &
    \quad V_{(1,0)} = \begin{bmatrix}0 & 1\\1 & 0\end{bmatrix} ,\\
    V_{(0,1)} = \begin{bmatrix}-1 & 0\\0 & -1\end{bmatrix}, &
    \quad V_{(1,1)} = \begin{bmatrix}0 & -1\\-1 & 0\end{bmatrix} .
\end{IEEEeqnarray}
If we use an embedding unitary
\begin{align}
    U(x_1, x_2) = R_Z(x_1) \otimes R_Z(x_2),
\end{align}
then the symmetry operations can be represented on the level of the Hilbert space as
\begin{align}
    U(x_2, x_1) &= R_Z(x_2) \otimes R_Z(x_1)  \\
    &= \SWAP U(x_1, x_2) \SWAP,  \\
    U(-x_1, -x_2) &= R_Z(-x_1) \otimes R_Z(-x_2)  \\
    &= (X\otimes X) U(x_1, x_2) (X \otimes X),
\end{align}
where we have used the fact that $XZX = -Z$. Note that we could have equivalently used $(Y \otimes Y)$ in the above example, so the operations are not necessarily unique.

We can use the intuition we gathered from the example to generalize the concept of an \emph{induced representation}: We call a data encoding unitary $U(\xx)$ \emph{equivariant} with respect to the data symmetry $V_s$\footnote{We choose \emph{equivariant} over the synonymous \emph{covariant} as this is the terminology used in geometric deep learning.} if
\begin{align}\label{eqn:embedding_symmetry_condition}
    U(V_s[\xx]) = U_s U(\xx) U_s^{\dagger}
\end{align}
for a unitary representation $U_s$ of $\calS$. The representation induced by our above example is then
\begin{IEEEeqnarray}{ll}
    U_{(0,0)} = \bbI \otimes \bbI ,&
    \quad U_{(1,0)} = \SWAP ,\\
    U_{(0,1)} = X \otimes X, &
    \quad U_{(1,1)} = \SWAP (X \otimes X).
\end{IEEEeqnarray}
In the following, we will show how we can construct data encodings which are equivariant with respect to the most important symmetry classes encountered in real-world learning tasks. 

\subsection*{Permutation symmetries}

Most data encountered in contemporary learning tasks, like images or time series data, has a discrete structure. Symmetry transformations of the data, like translating or rotating an image or translating a time series, can thus be captured by a permutation of the different data features. This means that in many relevant scenarios with symmetry, the symmetry group will be a subgroup of the \emph{permutation group} (or \emph{symmetric group}) $S_d$ of the $d$ different data features $\xx=(x_1, \ldots, x_d)$.

The elements of the symmetric groups are all possible reshufflings $\sigma$ of the data features
\begin{align}
    \sigma(x_1, x_2, \ldots, x_d) &= (x_{\sigma(1)}, x_{\sigma(2)}, \ldots, x_{\sigma(d)}).
\end{align}
Every permutation can be constructed by concatenating \emph{transpositions}, which are exchanges of two data features at a time. We will write a transposition as $\tau$ such that for all $\sigma \in S_d$ we have that $\sigma = \tau_m \dots \tau_2 \tau_1$ for some transpositions $\{ \tau_i \}$.
Note that a product of transpositions is usually considered to be implemented from right to left.

In the previous section, we already saw how to construct a data encoding which is equivariant with respect to the permutation of two elements, and the construction directly generalizes to higher dimensions, needing $d$ qubits for $d$-dimensional data. We complement this with a related construction that allows using approximately $\log d$ qubits but still retains equivariance. 
The natural construction is to embed every data feature via a Pauli rotation on a separate qubit, the choice of the generator is not relevant, so we settle for $Z$, to get
\begin{align}
    U(x_1,\ldots, x_d) &= R_Z(x_1) \otimes \cdots\otimes R_Z(x_d).
\end{align}
This encoding is obviously equivariant with respect to the permutations of the qubits and induces a representation of $S_d$ which is given by
\begin{align}
   U_\sigma = U_{\tau_m \dots \tau_2 \tau_1} = \SWAP_{\tau_m}  \dots \SWAP_{\tau_2} \SWAP_{\tau_1}
\end{align}
where $\SWAP_{\tau}$ swaps the two qubits corresponding to the features swapped by $\tau$. This embedding actually allows us to go even further than the permutation group and also equivariantly embed sign-flips of the $i$-th coordinate by conjugating with Pauli $X$ on the $i$-th qubit, thus generalizing to $S_d \rtimes \bbZ_2^d$.\footnote{
    This new group is the \emph{semidirect product} $S_d \rtimes\bbZ_2^d$ which captures the intuition that every element $g\in S_d\rtimes\bbZ_2^d$ can be written as the product $g=f\circ\sigma$ of one permutation $\sigma\in S_d$ and a number of sign-flips $f\in\bbZ_2^d$.
}

With this strategy, we need as many qubits as data features.
A natural question is whether it is possible to do better, namely if we could find a different encoding gate that embedded $d$ features in less than $d$ qubits and still be equivariant with respect to the permutation group.
If we want to add another Pauli-word as a generator to the above set of local $Z$ operators, we are limited by the fact that this generator has to be invariant under all $\SWAP$ operations between qubits because these should by definition only exchange the associated coordinates and not the newly added one. This means the only operator we can add is $Z^{\otimes d}$, yielding one additional data feature we could in principle encode with this strategy. In this case, the representation for exchanging the $i$-th coordinate with the $d+1$-st coordinate is a multi-$\CNOT$ gate where the $i$-th qubit controls a simultaneous application of $X^{\otimes d-1}$ on the rest of the qubits.

Note that we could in principle employ a different set of mutually commuting Pauli words as any set of independent, mutually commuting Pauli words are interchangeable via Clifford-gates~\cite{aaronson2004improved}. One example would be the set of generators $\{ Z \otimes \bbI, \bbI \otimes Z, Z \otimes Z\}$ which could equivalently be replaced by $\{ X \otimes X, Y \otimes Y, Z \otimes Z \}$, which would constitute a more \enquote{balanced} encoding strategy as the operators are of similar weight. For this set of operators, the permutations are generated by the transpositions
\begin{align}
    \tau_{12} &\leftrightarrow H_{XY} \otimes H_{XY}, \\
    \tau_{23} &\leftrightarrow H_{YZ} \otimes H_{YZ} ,\\
    \tau_{31} &\leftrightarrow H_{ZX} \otimes H_{ZX}, 
\end{align}
where $H_{AB}$ is a generalization of the Hadamard operator $H = H_{ZX}$ given by $H_{AB} = \exp(-i \pi  (A + B)/2\sqrt{2})$. 

The encoding based on local $Z$-rotations is very intimately related to the IQP encoding proposed in Ref.~\cite{havlicek2019supervised}. We can use their inspiration to also include higher monomials of the data features like for example $\xi = x_1 x_4$. If we embed such a monomial with a generator $Z_1 Z_4$, where $Z_i$ is a local $Z$ acting on the $i$-th qubit, then we see that a permutation of the features can be realized by a permutation of the qubits. An equivariant embedding of the data is then achieved by simultaneously embedding all possible permutations of monomial like $x_1 x_2$, $x_1 x_3$, $x_2 x_3$ and so on using the operators $Z_1 Z_2$, $Z_1 Z_3$, $Z_2 Z_3$ etc. The number of terms needed for this construction grows rather quickly because all possible permutations have to be considered. On the other, this means that we can expect to have to embed fewer terms if we only wish to realize a subgroup of the permutation group. If we consider only translations that map $x_i x_j$ to $x_{i+1} x_{j+1}$, we only need to embed all possible translations of the monomial which is linear in the number of features.

While the above strategy for embedding data is straightforward to implement, we can in principle embed exponentially many data features in an equivariant way. One strategy to do so is to assign every state of the computational basis as the collection of all $\ket{k}$ identified by the binary representation of the integer $k$ a generator
\begin{align}
    G_k = |k \rangle \!\langle k |.
\end{align}
This generates unitaries $R_k(x_k) = \exp(-i x_k |k \rangle \!\langle k |)$ which only put a phase on the amplitudes of the $k$-th computational basis state. The transposition of features $k$ and $l$ is then easily seen to be achieved by
\begin{align}
    \tau_{kl} \leftrightarrow |k \rangle \!\langle l | + |l \rangle \!\langle k | + \sum_{j \neq k,l} |j \rangle\!\langle j |,
\end{align}
which induces a representation of the permutation group for a number of data features up to $2^n$. However, contrary to the simple $Z$ rotations in the example above, these rotations are very difficult to implement, as they mostly correspond to operations controlled on nearly all qubits. Furthermore, as we will see in Sec.~\ref{sec:gateSymmetrization}, if all of the available basis states are used, only trivial operations will respect the symmetry of the system.

The above example motivates that the construction of equivariant embeddings should be possible with relatively few qubits, but at the cost of very complicated gates that need to be implemented. This reduces the practicality of such an approach, especially in the NISQ era. 
The other extreme is the 1-local embedding proposed at the beginning of this section, which essentially only allows us to embed as many data features as there are qubits but which can be implemented very easily. 
It is therefore an interesting direction for future research what kind of embeddings can be defined that interpolate between these two regimes and allow to embed more data features than the 1-local embedding while still being sufficiently implementable.

\subsection*{Continuous symmetries}
    
While many cases of practical interest display discrete symmetries, we need not constrain ourselves to this case. Especially when looking at possible applications in the natural sciences we can also encounter continuous symmetries, if for example rotating spatial data points around a certain axis does not alter predictions. The framework for dealing with continuous symmetries is given by Lie groups which are continuous groups with a differentiable structure.  

Representation theory tells us that any compact Lie group can be expressed as a subgroup of a suitably large unitary group~\cite{folland2016course}, and as any unitary group can be embedded in a larger special unitary group every Lie group can be represented on a suitably large quantum system.  
However, it is not a priori clear which Lie groups admit embeddings that induce a faithful unitary representation. 
We will show how we can construct an embedding that is equivariant with respect to the special orthogonal group in $3$ dimensions, $\SO(3)$, and show how we can extend this to the full orthogonal group $\O(3)$. We then outline how the general task can be understood mathematically.

\subsubsection*{$\SO(3)$ and $\O(3)$}
    
The group $\SO(3)$ is made of all possible rotations of a sphere centered at the origin.
For elements of the special orthogonal group, we will use lower case letters to distinguish them from their quantum counterparts.
Any element $r \in \SO(3)$ can be constructed by successively applying one of the three canonical rotations about the axes of the coordinate system which we denote as $r_x(\alpha), r_y(\alpha)$ and $r_z(\alpha)$.
In the following, we will use the parametrization in terms of the Euler angles given by
\begin{align}
    r(\psi,\theta,\phi) &= r_z(\psi)r_x(\theta)r_z(\phi).
\end{align}
We now define an equivariant embedding of a data point $\xx\in\bbR^3$ as
\begin{align}
    U(\xx) &= e^{-\frac{i}{2}(x_1X+x_2Y+x_3Z)} 
    = e^{-\frac{i}{2} \langle \xx, \ssigma \rangle},
\end{align}
where we defined the vector of Pauli operators ${\ssigma=(X,Y,Z)}$.
We can now exploit that this is essentially a mapping from data points to the Bloch sphere, of which we know that conjugation with a Pauli rotation generates the associated rotation of the Bloch sphere to obtain the induced representation. We only need that
\begin{align}
    \langle r_x(\alpha) \xx, \ssigma \rangle &= \langle \xx, r_x(-\alpha) \ssigma \rangle\\
    &= R_X(-\alpha) \langle \xx, \ssigma \rangle R_X(\alpha),\\
    \langle r_z(\alpha) \xx, \ssigma \rangle &= R_Z(-\alpha) \langle \xx, \ssigma \rangle R_Z(\alpha),
\end{align}
to deduce that
\begin{align}
    ~U(r(\psi, \theta, \phi) \xx) 
     =&~ e^{-\frac{i}{2}\langle \xx, r(-\phi, -\theta, -\psi) \ssigma \rangle} \\
     =&~ e^{-\frac{i}{2}\langle \xx, r_z(-\phi)r_x(-\theta)r_z(-\psi) \ssigma \rangle} \\
     \begin{split}
     =&~ R_Z(-\psi) R_X(-\theta) R_Z(-\phi) e^{-\frac{i}{2}\langle \xx, \ssigma \rangle}\\
     &\qquad  R_Z(\phi) R_X(\theta)R_Z(\psi).
     \end{split}
\end{align}
At this point, we recall that there is a parametrization for arbitrary unitaries $U\in\SU(2)$, also in terms of three angles $U(\psi,\theta,\phi) = R_Z(\psi)R_X(\theta)R_Z(\phi)$. Using this, we arrive at the desired induced representation
\begin{align}
    U(r(\psi, \theta, \phi) \xx) &= U(-\psi,-\theta,-\phi) U(\xx) U^{\dagger}(-\psi,-\theta,-\phi).
\end{align}
The full group of all orthogonal (\ie length-preserving) transformations on $\bbR^d$, $\O(3)$, is made of not only rotations but also reflections about any plane that passes through the origin.
Luckily, reflections about different planes can be related by the rotation that maps the planes to one another.
This means one fixed reflection, together with the three canonical rotations is enough to span the whole group $\O(3)$.
We pick the plane perpendicular to the vector $(1,1,1)$, which realizes an inflection about the origin $\xx \mapsto -\xx$. As we have seen that we have already exhausted $\SU(2)$ to represent $\SO(3)$ this can not be possible using an embedding with only a single qubit. However, we can actually straightforwardly realize this inflection if we add an additional qubit and use the embedding
\begin{align}
    U(\xx) = e^{-\frac{i}{2}( x_1 X + x_2 Y + x_3 Z) \otimes X}.
\end{align}
The rotations of $\SO(3)$ are embedded in the same way as before on the first qubit, but now the inflection can also be realized as
\begin{align}
    U(-\xx) &= e^{-\frac{i}{2}( -x_1 X - x_2 Y  - x_3 Z) \otimes X} \\
    &= e^{-\frac{i}{2}( x_1 X  +x_2 Y  +x_3 Z) \otimes -X} \\
    &= (\bbI \otimes Z) e^{-\frac{i}{2}( x_1 X  +x_2 Y  +x_3 Z) \otimes X} (\bbI \otimes Z) \\
    &= (\bbI \otimes Z) U(\xx) (\bbI \otimes Z).
\end{align}
In this way, we can generate the entire group $\O(3)$.

\subsubsection*{The general case}

We have found an encoding that is equivariant with respect to $\SO(3)$ transformations, but the immediate question arising from this construction is if it can be generalized to arbitrary Lie groups acting on the data. To this end, we will mathematically formalize the process that leads to an equivariant embedding.   
Note that it is required that the symmetry $\calS$ must have non-trivial finite-dimensional unitary representations which rules out interesting groups like $\SO(3,1)$ as they are not compact groups.

We assume that the embedding of the data is of the form
\begin{align}
    U(\xx) = e^{-i \calE(\xx)},
\end{align}
which means that we effectively embed the data into the Lie algebra of the quantum system $\fraks \fraku(2^n)$. We will further assume that the map from the data to the Lie algebra $\calE\colon \bbR^d \to \su(2^n)$ is linear. We model the symmetry transformation on the level of the data as a representation of the symmetry Lie group $\calS$ given by $V_s\colon \bbR^d \to \bbR^d$. Our aim is to find a map $W_s\colon \su(2^n) \to \su(2^n)$, such that the following diagram commutes:
\[ \adjustbox{scale=1.4,center}{
\begin{tikzcd}
\bbR^d \arrow{r}{\calE} \arrow[swap]{d}{V_s} & \mathfrak{su}(2^n) \arrow{d}{W_s} \\%
\bbR^d \arrow{r}{\calE}& \mathfrak{su}(2^n)
\end{tikzcd}
}\]
Here, $W_s$ is a transformation that implements the symmetry transformation on the level of the Lie algebra in an equivariant way. This means it needs to be a conjugation by a unitary
\begin{align}
    W_s[X] = U_s X U_s^{\dagger}
\end{align}
for some $U_s \in \SU(2^n)$ that represents the symmetry group $\calS$ acting on the data. The above is nothing but the adjoint action of $U_s$, which if $U_s = e^{H}$, is generated by the adjoint action on the level of the Lie algebra
\begin{align}
    U_s X U_s^{\dagger} = \Ad_{U_s}[X] = e^{\ad_H}[X],
\end{align}
where $\ad_H[X] = [H, X]$. Looking back to the data space, we can use that $V_s$ is a matrix representation of a Lie group and that we can write $V_s = e^{G}$ for some element of the Lie algebra of $\calS$, $G \in \fraks$.
Making the diagram commute then enforces
\begin{align}\label{eqn:induced_adjoint}
\calE e^G \overset{!}{=} e^{\ad_H} \calE,
\end{align}
which can be quickly verified to be the case if
\begin{align}
    \calE G = \ad_H \calE.
\end{align}
In other words, this means that, to construct an equivariant embedding for a given Lie group representation acting on the data, we need to find a subspace of $\su(2^n)$ -- which is then the image of $\calE$ -- such that the restriction of the adjoint representation of $\su(2^n)$ to this subspace is \emph{identical} to the representation of the symmetry Lie algebra $\fraks$ acting on the data, $G$. 

This elucidates \emph{why} we were able to construct an embedding equivariant with respect to $\SO(3)$. It is because the fundamental representation of the Lie algebra $\so(3)$ which acts on the data is equal to the adjoint representation of $\su(2)$. Additional to this, we can also see that there always exists a trivial embedding $\calE_0[X] = 0$ which induces a trivial representation of the Lie algebra. From the above reasoning, we also see that it is not enough to find a sub Lie algebra isomorphic to the one acting on the data, we really have to make sure that there is an appropriate subrepresentation of the adjoint representation of $\su(2^n)$ identical to the representation of $\fraks$ on the data space.
As we think that the realization of embeddings equivariant with respect to symmetries more exotic than $\O(3)$ is a topic of limited interest we leave the classification of equivariant embeddings that can be realized in this setup as a topic of further investigation.

\section{Gate symmetrization \label{sec:gateSymmetrization}}

As we have seen above, symmetries arise naturally in variational quantum learning models if an equivariant embedding is used. In these applications, it is paramount to construct the trainable parts of the model in a way that encodes an inductive bias suitable to the problem at hand. The problem we face is that the knowledge of the relation between parametrized quantum circuits and the associated inductive bias is not really understood, leaving us with precariously little to inform our construction of learning models. Symmetries provide a first avenue to the construction of better quantum learning models as they allow us to include knowledge about the underlying data into the model in a meaningful way. They furthermore allow us to reduce the complexity of the ansatz as measured by the number of free parameters and thus also save resources. The same holds true for other variational applications, \eg in the search for ground states. There, the goal is not a better generalization capability but a better expressivity in the relevant parts of the Hilbert space.

In this section, we explain how we can use elementary group theory to construct an \emph{equivariant gateset} from a standard gateset used in ansatz constructions, where, as we will make formal below, we define equivariant as commuting with the symmetry representation.
This allows us to take an existing ansatz and to make it equivariant by replacing every gate by its equivariant counterpart. We will also explore why this approach has its advantages but is no panacea as there are several pitfalls to avoid. This sometimes makes it advisable to not build equivariance with respect to the full symmetry group into the model but to only consider subgroups thereof.

\subsection*{Equivariant gatesets}

We will focus on gates generated by a fixed generator $G$ as
\begin{align}\label{eqn:parametrized_gate_def}
    R_G(\theta) = \exp( -i \theta G),
\end{align}
as they are usually encountered in ansatz constructions for variational approaches. When constructing an ansatz, gates are picked with generators from a fixed gateset $G \in \calG$. 

We call a gate \emph{equivariant} with respect to the symmetry embodied by a unitary representation $U_s$ of $s\in \calS$ if the order of applying the symmetry operation and the gate itself does not matter, so that
\begin{align}
    [R_G(\theta), U_s] = 0 \text{ for all } \theta \in \bbR, s \in \calS.
\end{align}
This can only be the case if the generator itself commutes with the representation which is captured by the following proposition.

\begin{proposition}[Commuting generators]
For a given gate $R_G$ we have that
\begin{align}
    [R_G(\theta), U_s] = 0 \text{ for all } \theta \in \bbR, s \in \calS
\end{align}
if and only if $[G, U_s] = 0$ for all $s \in \calS$.
\end{proposition}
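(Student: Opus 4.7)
The plan is to prove both directions separately, with the forward direction ($[G,U_s]=0 \Rightarrow [R_G(\theta),U_s]=0$) being essentially immediate from the power series expansion of the exponential, and the reverse direction handled by differentiating at the origin.

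For the forward direction, I would expand $R_G(\theta)=\exp(-i\theta G)=\sum_{k=0}^{\infty}\frac{(-i\theta)^k}{k!}G^k$. If $[G,U_s]=0$, a trivial induction shows $[G^k,U_s]=0$ for every nonnegative integer $k$, since $G^{k}U_s=G^{k-1}U_s G = U_s G^k$. Since the commutator is linear and continuous with respect to the operator-norm-convergent series, it follows that $[R_G(\theta),U_s]=0$ for every $\theta\in\bbR$.

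For the reverse direction, I would use the identity $\frac{\diff}{\diff \theta}R_G(\theta)=-iG\,R_G(\theta)$ (which holds because $G$ commutes with itself, so there is no ordering ambiguity). Given that $R_G(\theta)U_s=U_s R_G(\theta)$ for all $\theta$, differentiating both sides with respect to $\theta$ and evaluating at $\theta=0$ yields $-iG U_s = -i U_s G$, i.e.\ $[G,U_s]=0$. This step is where all the content lies, but it is still routine; the only subtlety worth noting is that the finite dimensionality of our Hilbert space guarantees the matrix exponential is analytic in $\theta$, so differentiation under the series is unproblematic.

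The main \enquote{obstacle} here is really just notational care: one should make clear that the equivalence is between a statement quantified over all $\theta\in\bbR$ on one side and a single algebraic condition on the other. Had the hypothesis only assumed $[R_G(\theta),U_s]=0$ for a single nonzero $\theta$, the statement would in general fail (consider $\theta=2\pi$ with $G$ having integer spectrum), so the universal quantifier over $\theta$ is essential and should be highlighted when applying the derivative argument.
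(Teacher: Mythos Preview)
Your proof is correct and essentially matches the paper's: both directions are handled the same way, with the power-series expansion for sufficiency and a first-order (derivative at $\theta=0$) argument for necessity. Your added remark about the necessity of the universal quantifier over $\theta$ is a nice touch that the paper omits.
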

\begin{proof}
First, we show that the condition is necessary. For this consider the expansion of $R_G[\theta]$ to first order
\begin{align}
    0 &= [R_G(\theta), U_s] \\
    &= [\bbI - i  \theta G +  O(\theta^2), U_s] \\
    &= i  \theta [G, U_s] + O( \theta^2) \\
    \Leftrightarrow \ 0 &= [G, U_s]
\end{align}
as this relation needs to be valid also for infinitesimally small $\theta$. The condition is obviously sufficient as $[G, U_s] = 0$ implies that all powers of $G$ and hence the full exponential commutes with $U_s$.
\end{proof}

Luckily for us, there is a straightforward way to ensure that a generator does commute with a given representation -- we can make use of the \emph{twirling formula}.

\begin{proposition}[Twirling formula~\cite{helsennodatequantum}]
Let $U_s$ be a unitary representation of $\calS$. Then, 
\begin{align}
    \calT_U[X] = \frac{1}{|\calS|} \sum_{s \in \calS} U_s X U_s^{\dagger}
\end{align}
defines a projector onto the set of operators commuting with all elements of the representation, i.e., $[\calT_U[X], U_s] = 0$ for all $X$ and $s \in \calS$.
\end{proposition}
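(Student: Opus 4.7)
The plan is to verify the two properties that make $\calT_U$ a projector onto the commutant: (i) its image lies in the commutant of the representation, and (ii) it acts as the identity on that commutant. Idempotency $\calT_U^2 = \calT_U$ will then follow at once from the two.

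First I would prove the commutation property. Fix an arbitrary $t \in \calS$ and any operator $X$, and compute
\begin{align}
U_t \calT_U[X] U_t^{\dagger} &= \frac{1}{|\calS|} \sum_{s \in \calS} U_t U_s X U_s^{\dagger} U_t^{\dagger} \\
&= \frac{1}{|\calS|} \sum_{s \in \calS} U_{t \circ s} X U_{t \circ s}^{\dagger},
\end{align}
where I used that $U$ is a representation, so $U_t U_s = U_{t \circ s}$ and correspondingly $U_s^{\dagger} U_t^{\dagger} = U_{t \circ s}^{\dagger}$. The key observation is then that as $s$ ranges over $\calS$, so does $t \circ s$, since left-multiplication by $t$ is a bijection on the group (its inverse being left-multiplication by $t^{-1}$). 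Relabeling the summation index therefore yields $U_t \calT_U[X] U_t^{\dagger} = \calT_U[X]$, which is equivalent to $[\calT_U[X], U_t] = 0$ for all $t \in \calS$.

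Second, I would verify that $\calT_U$ fixes elements of the commutant. If $X$ satisfies $[X, U_s] = 0$ for all $s$, then $U_s X U_s^{\dagger} = X U_s U_s^{\dagger} = X$, so each summand equals $X$ and the average evaluates to $X$. Combining the two properties gives idempotency: for any $X$, the element $\calT_U[X]$ lies in the commutant by step one, hence applying $\calT_U$ again leaves it unchanged by step two, so $\calT_U^2 = \calT_U$. Together, these three facts say precisely that $\calT_U$ is a projector whose range is the commutant of $\{U_s : s \in \calS\}$.

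Since the argument is a standard rearrangement plus a trivial fixed-point check, there is no real obstacle; the only step that deserves care is the bijection-of-summation-indices move, which relies on $U$ being a genuine representation (so the group law lifts to the operators) and on $\calS$ being finite so that the uniform average makes sense. For continuous groups one would replace the finite sum by a Haar integral, but the same argument with invariance of Haar measure would go through verbatim.
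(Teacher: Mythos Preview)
Your proof is correct and follows the standard argument. Note, however, that the paper does not actually supply its own proof of this proposition: it is stated with a citation and immediately followed by the remark that the analogous statement holds for Lie groups upon replacing the finite average by a Haar integral. There is therefore nothing in the paper to compare your argument against; your write-up is precisely the textbook verification one would expect for this well-known fact.
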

The same holds true for Lie groups if we replace the uniform average with an integration over the Haar measure $\mu$,
\begin{align}
    \calT_U[X] = \int \diff \mu(s) \, U_s X U_s^{\dagger}.
\end{align}
We use this approach to associate to any gateset $\calG$ an \emph{equivariant gateset}
\begin{align}\label{eqn:def_equivariant_gateset}
    \calT_U[\calG] = \{ \calT_U[G] \, | \, G \in \calG \}.
\end{align}

Note that this approach also directly extends to gates which are not parametrized, as these gates can either be directly symmetrized via the twirling formula or they can be written as a parametrized gate with fixed evolution angle as in Eq.~\eqref{eqn:parametrized_gate_def}.

\subsection*{Ansatz symmetrization}

We can now use the aforementioned gate symmetrization technique to convert a complete ansatz (or trainable block) to an equivariant ansatz (or equivariant trainable block) by replacing the ansatz' gateset with its equivariant counterpart. The computation of the equivariant gateset can for practical purposes usually be done efficiently beforehand.

To make this more palpable, let us return to our example with an exchange symmetry on two qubits. Assume an ansatz is made up of local rotation gates generated by the Pauli operators, $\{X ,Y,Z\}$, and entangling gates generated by a $ZZ$ interaction, all of which we consider as trainable. In this case, the gateset is
\begin{align}
    \calG = \{X_1, Y_1, Z_1, X_2, Y_2, Z_2, Z_1 Z_2\},
\end{align}
where the index identifies the qubit the Pauli is acting on. 
The $R_{ZZ}$ gate already commutes with the swap operation, which means we can focus on the local operations only. 

In our example, the symmetry group is $\bbZ_2 \times \bbZ_2$, where the symmetries were generated by $\SWAP$ (exchange $\leftrightarrow$) and $X \otimes X = X_1 X_2$ (sign flip $\pm$). We will first consider the symmetrization over the two subgroups only to synergize afterwards.

The generator $Z_1 Z_2$ already commutes with the swap operations and thus stays invariant under symmetrization. If we apply the symmetrization with respect to the exchange symmetry to $X_1 = X \otimes \bbI$, we obtain
\begin{align}
    \calT_{U_{\leftrightarrow}}[X_1] &= \calT_{U_{\leftrightarrow}}[X \otimes \bbI] \\
    &=\frac{1}{2}\left[X \otimes \bbI + \SWAP(X \otimes \bbI)\SWAP \right]\\
    &=\frac{1}{2}\left[X \otimes \bbI + \bbI \otimes X \right]\\
    &=\frac{1}{2}\left[X_1 + X_2 \right] \\
    &= \calT_{U_{\leftrightarrow}}[X_2].
\end{align}
The symmetrization proceeds analogously for the other operators. We see that $X_1$ and $X_2$ map to the same operator upon symmetrization -- hence, the symmetrized gateset has a lower cardinality. This means that, as expected, the subspace of symmetry-preserving unitaries is smaller than the full space of unitaries and that upon symmetrization, we reduce the number of parameters, and hence the complexity, of the ansatz. The gateset equivariant with respect to the exchange symmetry is then given by symmetric entangling gates and simultaneous Pauli rotations on the two qubits that share the same angle,
\begin{align}
    \calG_{\leftrightarrow} = \left\{\frac{X_1 + X_2}{2}, \frac{Y_1 + Y_2}{2}, \frac{Z_1 + Z_2 }{2}, Z_1 Z_2 \right\}.
\end{align}
Now, let us look at the sign flip symmetry. Again, $Z_1 Z_2$ already commutes with $X_1 X_2$ and stays invariant under symmetrization. The situation is again different for the local gates. It is quite straightforward that $X_1$ and $X_2$ commute with $X_1 X_2$, but this is not true for the other gates as
\begin{align}
    \calT_{U_{\pm}}[Y_1] &= \calT_{U_{\pm}}[Y \otimes \bbI]  \\
    &= \frac{1}{2}[Y \otimes \bbI + (X\otimes X)(Y \otimes \bbI)(X \otimes X)] \\
    &= \frac{1}{2}[Y \otimes \bbI - Y \otimes \bbI] \\
    &= 0,
\end{align}
where we have used the fact that $XYX = -Y$. The calculation goes analogously for $Y_2$ and also for $Z_1$ and $Z_2$ because $XZX = -Z$. This means that the sign flip equivariant gateset looks rather different than the one for the exchange symmetry, as in this case we are only allowed local Pauli $X$ rotations
\begin{align}
    \calG_{\pm} = \left\{ X_1, X_2, Z_1 Z_2 \right\}.
\end{align}

Due to the commuting nature of the group, we can obtain the fully equivariant gateset by applying either symmetrization procedure to the other gateset which yields
\begin{align}
    \calG_{\leftrightarrow , \pm} = \left\{\frac{X_1 + X_2}{2}, Z_1 Z_2 \right\}.
\end{align}
We see that taking into account the full symmetry greatly reduces the amount of available operations, but which comes at a cost of reduced expressivity as we will also detail further in the next section.

\subsection*{Pitfalls to avoid}

Making an ansatz equivariant through symmetrization can bring considerable advantages but is by no means a panacea. There are important considerations and a series of pitfalls that one should be aware of. 

First and foremost, there is always a trade-off between the gain in equivariance and therefore specialization to the relevant subspace and the reduction in expressivity of the ansatz. This is the point where quantum machine learning and ground state problems arguably differ the most: In machine learning applications, maximal expressivity is almost always a bad thing, as it leads to overfitting and therefore bad generalization performance. Depending on the amount of available data and the specifics of the learning model, the regime of overfitting can also be reached relatively quickly. In this case, equivariant models offer a clear advantage as they not only reduce the expressivity but they do so in a way that ensures that generalization improves. 
For the preparation of ground states with a given symmetry, the picture looks somewhat different. In this case, there exists no phenomenon of overfitting \emph{per se}, as a lower energy is preferred even if the prepared state does not have the same symmetry as the ground state. 

We can conclude that the trade-off between expressivity and equivariance should always be kept in mind. To fine-tune this trade-off, it can also be advised to only realize a subset of all existing symmetries of the problem. In our example above, one could for example opt to only respect the exchange symmetry of the problem. Another way to fine-tune expressivity of the model is by including a limited number of explicitly symmetry-breaking gates, as was for example argued in Ref.~\cite{ParkBrokenSymm}. This can be especially advisable if we do not know the symmetry sector the ground state is in, so that we have to have the ability to change the sector which is impossible with purely equivariant circuits. It can furthermore happen that an equivariant circuit has an unfavorable loss landscape which can sometimes be alleviated by adding some symmetry-breaking operations.

During the symmetrization process itself, there are further pitfalls to avoid:
First, we have already seen that the symmetrization procedure can trivialize certain generators. One example is the representation of $\bbZ_2$ given by $U_0 = \bbI$ and $U_1 = X$. We can express every single-qubit unitary as a decomposition $V = R_Z(\theta_1) R_Y(\theta_2) R_Z(\theta_3)$ which corresponds to using the gateset $\calG = \{ Y, Z \}$. However, neither of these generators is equivariant with respect to this representation, i.e., $\calT_U[Y] = \calT_U[Z] = 0$, leaving us with \emph{no gates}, even though we started from a universal parametrization. This is in contrast to the fact that we have a set of unitaries compatible with the symmetries which is generated as $R_X(\theta)$. If we had chosen, for example $V =R_X(\theta_1) R_Y(\theta_2) R_Z(\theta_3)$, the trivialization would not have occurred.

Second, depending on the gates that were used to construct the ansatz it can lose universality. This can happen if the gates involved do not generate the full set of unitaries compatible with the symmetry. In the context of quantum chemistry and fermi-to-qubit mappings, 3-local unitaries are necessary to generate all possible symmetry-preserving transformations~\cite{arrazola2021universal,oszmaniec2017universal}. If one would naively symmetrize an ansatz built from the customary single- and two-qubit gates one would loose universality. 

Third, the circuit depth could increase dramatically under symmetrization. This would happen when the symmetry induces interactions which are non-local with respect to the underlying hardware and which have to be realized by extensive SWAP-chains. 

We also want to note that making the ansatz equivariant is not the only way to make use of symmetries. One way to do so is to use \emph{penalty terms}, which are additional parts of the cost function that increase the cost of states that have low symmetry~\cite{kuroiwa2021penalty,ryabinkin2018symmetry,yen2019exact}. This is comparable to the use of regularization terms in classical machine learning. Penalty terms can be constructed for arbitrary symmetry representations and have the charm that they act through the \emph{optimizer} instead of the ansatz, which means they can be generically included to partially alleviate problems with uninformed ansätze.

\section{Invariant re-uploading models}\label{sec:invariant_models}

In this section, we want to summarize the construction of invariant re-uploading models. We repeat the definition of a generic data re-uploading model in our sense. We prepare an ansatz state vector consisting of repeated applications of trainable blocks $\{ W_i(\ttheta) \}$ and a data-encoding unitary $U(\xx)$
as
\begin{align}
    \ket{\psi(\ttheta, \xx)} =  W_L(\ttheta) U(\xx) \dots W_1(\ttheta) U(\xx) W_0(\ttheta) \ket{\psi_0}.
\end{align}
A prediction is then obtained from the expectation value 
\begin{align}
    y(\xx) = \langle \psi(\ttheta, \xx) | O | \psi(\ttheta, \xx) \rangle
\end{align}
of an observable. We wish to construct the quantum learning model in a way that the prediction is invariant under the action of a symmetry group $\calS$ acting through a representation $V_s$ for $s \in \calS$ in the sense that
\begin{align}
    y(V_s[\xx]) = y(\xx) \text{ for all } \xx, s \in \calS.
\end{align}
As we have argued in Sec.~\ref{sec:symmetric_structures}, we can do so for quantum learning models if the data embedding $U(\xx)$ is \emph{equivariant} with respect to the symmetry in the sense that it \emph{induces} a unitary representation of the symmetry group on the Hilbert space according to
\begin{align}
    U(V_s[\xx]) = U_s U(\xx) U_s^{\dagger} \text{ for all } s\in \calS.
\end{align}
If the data embedding is not equivariant we have little hope to build invariant quantum learning models as the symmetry is not meaningfully represented on the level of the Hilbert space. Under an equivariant embedding, the parts of the re-uploading model transform like
\begin{align}
     U(V_s[\xx]) W_i(\ttheta) U(V_s[\xx])  =  U_s U(\xx) U_s^{\dagger} W_i(\ttheta) U_s U(\xx) U_s^{\dagger}.  
\end{align}
We can make this construction equivariant by enforcing that the trainable block $W_i(\ttheta)$ is \emph{equivariant} with respect to the unitary representation $U_s$ of $\calS$, mathematically formulated as 
\begin{align}\label{eqn:equivariance_trainable_block}
    [W_i(\ttheta), U_s] = 0 \text{ for all } \ttheta, s\in \calS.
\end{align}
As argued in Sec.~\ref{sec:gateSymmetrization}, the representation $U_s$ gives us a direct recipe to enforce equivariance of the trainable blocks. This is because the \emph{twirling formula} 
\begin{align}
    \calT_U[X] = \frac{1}{|\calS|} \sum_{s \in \calS} U_s X U_s^{\dagger}
\end{align}
is a projector onto all operators that commute with the symmetry representation (a projection onto the \emph{commutant}) and are thus \emph{equivariant}. We use the fact that ansätze are constructed from parametrized gates with generators a gateset $\calG$ to define an \emph{equivariant gateset} $\calT_U[\calG]$, which consists of the twirled generators of the standard gateset. The equivariant gateset now contains building blocks that can be freely combined to construct equivariant trainable blocks. 
Combining data embeddings equivariant with respect to the data symmetry with trainable blocks that are equivariant with respect to the induced symmetry on the Hilbert space gives rise to an \emph{equivariant circuit}
\begin{align}
    U(\ttheta, V_s[\xx]) = U_s U(\ttheta, \xx) U_s^{\dagger} \text{ for all } \ttheta, \xx, s \in \calS.
\end{align}

We have so far shown how to construct equivariant circuits. To reach invariance of the final prediction, we also need an \emph{invariant} initial state vector
\begin{align}
    U_s \ket{\psi_0} = \ket{\psi_0} \text{ for all } s \in \calS.
\end{align}
As $U_{s} = U_{s^{-1}}^{\dagger}$ this equivalently implies that 
\begin{align}
    U_s^{\dagger} \ket{\psi_0} = \ket{\psi_0} \text{ for all } s \in \calS
\end{align}
If we now apply an \emph{equivariant circuit} to an \emph{invariant initial state}, we obtain an \emph{equivariant ansatz}
\begin{align}
    \ket{\psi(\ttheta, V_s[\xx])} &= U(\ttheta, V_s[\xx]) \ket{\psi_0} \\
     &= U_s U(\ttheta, \xx)U_s^{\dagger} \ket{\psi_0} \\
     &= U_s U(\ttheta, \xx) \ket{\psi_0} \\
     &= U_s \ket{\psi(\ttheta, \xx)}\text{ for all } \ttheta, \xx, s \in \calS.
\end{align}
Finally, if we combine an equivariant ansatz with an \emph{invariant observable}
\begin{align}\label{eqn:invariant_observable}
    U_s O U_s^{\dagger} = O \text{ for all }s \in \calS,
\end{align}
we obtain an \emph{invariant re-uploading model}
\begin{align}
    y(V_s[\xx]) &= \langle {\psi(\ttheta, V_s[\xx])} | O | {\psi(\ttheta, V_s[\xx])} \rangle\\
    &= \langle {\psi(\ttheta, \xx)} | U_s^{\dagger} O U_s | {\psi(\ttheta, \xx)} \rangle\\
    &= \langle {\psi(\ttheta, \xx)} | O | {\psi(\ttheta, \xx)} \rangle\\
    &= y(\xx)\text{ for all } \ttheta, \xx, s \in \calS.
\end{align}

Note that the conditions for invariance of the observable in Eq.~\eqref{eqn:invariant_observable} the condition for the equivariance of the trainable blocks in Eq.~\eqref{eqn:equivariance_trainable_block} are actually the same if spelled out. The reason for this apparent contradiction is that both objects are operators, but they do different things: the trainable block performs a transformation of states, but the initial state and the final observable act like objects that are transformed. The picture becomes more clear if we look at the operations in terms of their action on density matrices, \ie if we understand them as quantum channels. Then, equivariance of a trainable block manifests as
\begin{align}
    W_i(\ttheta) U_s \rho U_s^{\dagger} W_i(\ttheta)^{\dagger} = U_s W_i(\ttheta) \rho W_i(\ttheta)^{\dagger} U_s^{\dagger},
\end{align}
whereas invariance of an observable is still realized as in Eq.~\eqref{eqn:invariant_observable}. Both pictures come together in the fact that conjugation with an \emph{invariant} object (in our case the unitary representing the trainable block) constitutes an \emph{equivariant} transformation. 

The construction outlined above can be viewed as the \enquote{adjoint} or \enquote{two-sided} version of the geometric deep learning blueprint where we have to start from an invariant object, the initial state, apply equivariant operations and then evaluate the expectation value again on an invariant object. 


\section{Numerical experiments}\label{sec:numerics}

In this section, we report on the results of numerical experiments we undertook to compare the performance of invariant learning models with their non-invariant counterparts on two selected toy problems with non-trivial symmetry. We observe that invariant learning models under-perform on the training data but have a much better generalization performance than learning models that do not respect the symmetry considerations. To ensure that the symmetrization procedure proposed in Sec.~\ref{sec:gateSymmetrization} helps generically and we did not cherry-pick a working edge-case, we also performed simulations where we randomized over the layouts of the trainable parts of our learning models that confirm our main numerics and show a generic advantage in generalization.

We have further investigated the performance of equivariant ansätze for ground state problems at the hands of the transverse-Field Ising model, the Heisenberg model and a longitudinal-transverse-field Ising model that shares the same symmetries as our learning toy problems. For these problems, and in the ansatz constructions we have studied, equivariant ansätze produce better ground state approximations on average while needing fewer iterations to converge. We further show that equivariant ansätze can mitigate the barren plateaus problem. We note, however, that the applicability picture is not as clear as in the quantum machine learning application and that equivariant ansätze can also have downsides which we discuss in detail.

\subsection*{Tic-tac-toe}
To showcase the methods outlined above we start by considering a simple training task based on the well-known \emph{game of tic-tac-toe}. The goal will be to train a variational quantum learning model to classify games into the categories \enquote{cross won} ($\times$), \enquote{circle won} ($\circ$) or \enquote{draw} ($-$). This classification problem has a non-trivial symmetry group, as rotating the board and reflecting the board about an axis does not change the outcome. It also has the advantage that it can be realized with equivariant embeddings on a modest number of nine qubits. The symmetries of the tic-tac-toe game are depicted on the left of Fig.~\ref{fig:ttt}.

The symmetry group of this learning task is given by the \emph{dihedral group} $D_4$, which is equivalent to all operations that map a square to itself. The dihedral group can be generated by a counter-clockwise rotation of 90 degrees and a flip about the vertical axis through the center. The group has order $8$, so has a total of $8$ elements. The group induces equivalence classes of fields of the tic-tac-toe board, as corners of the board will always be mapped to corners, edges to edges and the center will always stay the same. These three equivalence classes will also be mirrored in the equivariant gateset. 

We note that the task of labeling games of tic-tac-toe is easily solved by a classical deterministic algorithm, and in fact, only a finite number of games are even possible. The goal of this numerical example is to showcase a possible end-to-end implementation of our symmetrization procedure and to compare equivariant with standard gatesets. Furthermore, in some of the following sections, we discuss how this example can be taken as a paradigm to tackle other, potentially more useful problems.

\begin{figure*}
    \centering
    \includegraphics{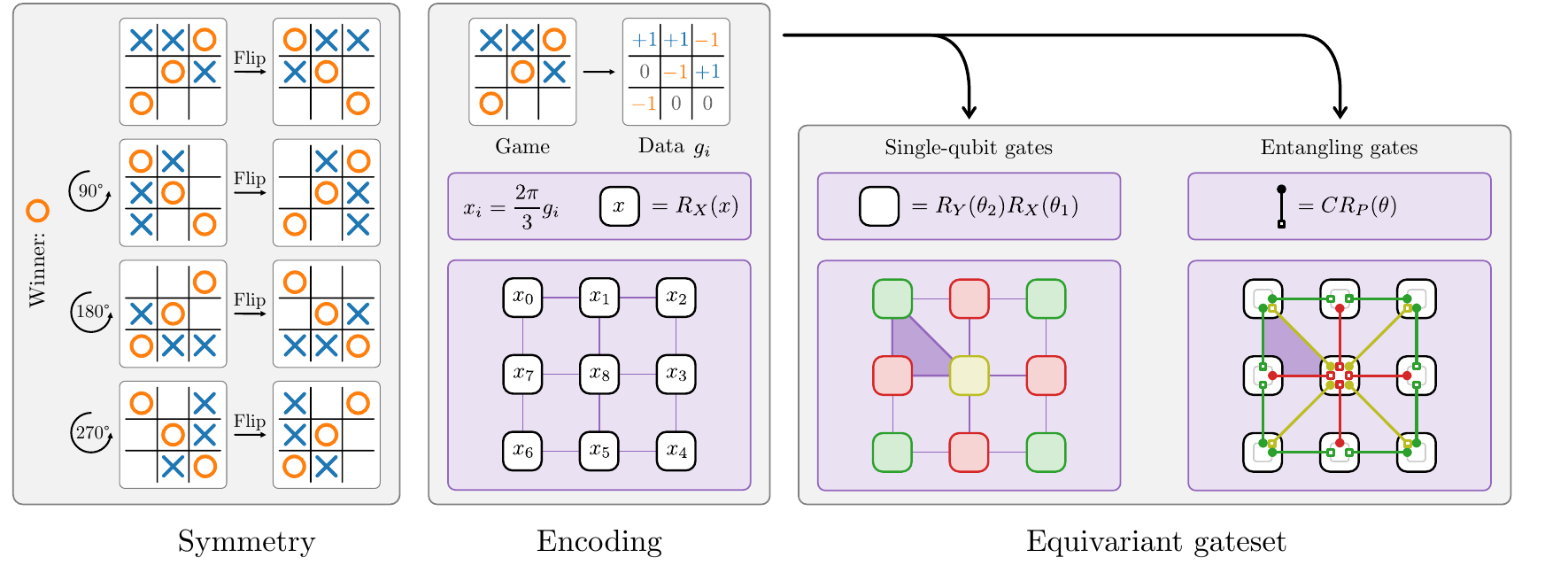}
    \caption{The winner of a game of tic-tac-toe is independent of rotations and flips of the board, which means its symmetry group is given by the dihedral group $D_4$, as visualized on the left side of the figure. We use a dataset of tic-tac-toe games encoded via Pauli-$X$ rotations on nine qubits, as visualized in the second column of the plot. The three different values of a field are equidistantly encoded by using a multiple of ${2\pi}/{3}$ for the rotation angle. Using our symmetrization procedure, we obtain an equivariant gateset made up of single-qubit gates acting identically on the corners (green), edges (red) and the middle (yellow) of the board. The two-qubit gates are given by controlled rotations between corners and edges (green), edges and the middle (red) and the middle and the corners (yellow) where the control qubit was always listed first. Due to the symmetry, it is sufficient to specify the gates on one neighboring trio of corner edge and middle, as visualized by the dark purple triangle.}
    \label{fig:ttt}
\end{figure*}

\subsubsection*{Dataset}

The first process is mapping a game of tic-tac-toe to classical data. As shown in the second column of Fig.~\ref{fig:ttt}, we do so by mapping the nine fields of the board to elements of a vector $\gg$ where $+1$ represents a cross, $-1$ a circle and $0$ an empty field. The labels of the game are encoded in a one-hot fashion in a vector $\yy = (y_{\circ}, y_{-}, y_{\times})$ where a value of $+1$ is assigned to the correct label and $-1$ to the other two entries. 
The number of distinct tic-tac-toe games is sufficiently small so that we were able to generate all possible valid games. In our dataset, we also allow for unfinished games, which are labeled as \enquote{draw}. 
The training and test data sets were then constructed by choosing a subset of all possible games of tic-tac-toe at random, with the constraint that each outcome is equally represented. 

\subsubsection*{Learning model}

To address the tic-tac-toe learning task, we will make use of a data re-uploading model with an equivariant embedding as described in Secs.~\ref{sec:preliminaries} and \ref{sec:symmetric_structures}. 
The equivariant embedding is constructed by encoding the different numerical values that represent a game via a Pauli-$X$ rotation on separate qubits that we view in a planar grid. To distribute the three data features equidistantly, we use a multiple of $2\pi/3$ for the rotation angle, again as shown in the second column of Fig.~\ref{fig:ttt}.  

The equivariance of the embedding ensures that symmetry transformations are realized by unitary conjugation. For example, a reflection along the vertical axis is implemented by $\SWAP_{02}\SWAP_{73}\SWAP_{64}$ in the numbering of Fig.~\ref{fig:ttt}. The advantage of permutation-type symmetries on the level of qubits is that they can easily be understood on a visual level. 
The qubits lie in equivalence classes under the symmetry operation. Single-qubit gates then have to act on all qubits of the same equivalence class equally. This is how we get the type of equivariant single-qubit layer that is depicted in the third column of Fig.~\ref{fig:ttt} where single-qubit unitaries applied that share the same parameters when acting on corners (c), on edges (e) or on the middle (m). The same reasoning can also be applied to two-qubit gates. An entangling operation that connects, for example, a corner with the edge next to it has to act in the same way on all pairs of neighboring corners and edges. This is how we obtain the equivariant layers of entangling gates used for our learning models, which perform controlled rotations from corners to neighboring edges (o), edges to the middle (i) and from the middle to the corners (d), as depicted in the fourth column of Fig.~\ref{fig:ttt}. We chose $CR_Y$ for parametrized rotations.

The learning model starts with all qubits initialized in the $\ket{0}$ state vector, which is invariant under the problem's symmetry. Then a number of layers are applied, each made up of one data encoding followed by a sequence of parametrized layers. The default parametrized layer was chosen to be \enquote{cemoid}, which corresponds to one application of the single-qubit gates followed by the entangling gates, both visible in Fig.~\ref{fig:ttt}. The prediction of a label is obtained in a one-hot-encoding by measuring the expectation values of three invariant observables
\begin{align}
    O_{\circ} &= \frac{1}{4}\sum_{i \in \text{corners}} Z_i= \frac{1}{4}[Z_0 + Z_2 + Z_4 + Z_6],\\
    O_{-} &= Z_{\text{middle}} = Z_8, \\
    O_{\times} &= \frac{1}{4}\sum_{i \in \text{edges}} Z_i = \frac{1}{4}[Z_1 + Z_3 + Z_5 + Z_7]
\end{align} 
as $\hat{\yy} = (\langle O_{\circ} \rangle, \langle O_{-}\rangle, \langle O_{\times} \rangle)$.  A prediction for a given data point is obtained by selecting the class for which the observed expectation value is the largest.

\subsubsection*{Training}

We train the learning model using an $l_2$-loss function, which for a set of games with associated one-hot label vectors $\calD = \{ (\gg, \yy) \}$ is given by
\begin{align}
    \calL(\calD)  = \frac{1}{|\calD|}\sum_{(\gg, \yy) \in \calD} \lVert \hat{\yy}(\gg) - \yy \rVert_2^2.
\end{align}
We ran optimizations with 100 epochs, each consisting of 30 steps. At each step, the gradient was computed using 15 data points representing a tic-tac-toe game. The size of the training set is then the product of these two numbers $15\times 30 = 450$. The training set is shuffled after each epoch is completed.
The test set was composed of 600 games chosen randomly for each training run with the same constraint as above but kept fixed throughout the run. 
The above hyper-parameters were chosen empirically. 

The quantum learning model was implemented with the PennyLane library~\cite{bergholm2020pennylane} for quantum machine learning. Using the PyTorch interface provided by PennyLane, we trained the PQC by gradient descent as implemented by the Adam PyTorch optimizer~\cite{NEURIPS2019_9015}.

\subsubsection*{Results}

In all our numerics, we compare invariant with non-invariant models where the parameter sharing indicated in Fig.~\ref{fig:ttt} is not imposed. The non-invariant models therefore have more independent parameters and hence a higher expressivity.
To evaluate the performance of the models, we record the classification accuracy for the training and the test set.
The general trend we observe is that invariant models built from equivariant circuits achieve the same or lower accuracy on the training set, but consistently higher accuracy on the test set which indicates their better generalization capabilities.

We first verify that this is the case for circuits of different sizes.
Fig.~\ref{fig:lp_sweep_ttt} compares the results of invariant and non-invariant models composed of $l$ layers, each of which consists of one data encoding followed by $p$ independent repetitions of the layout \enquote{cemoid}.
We refer to the different architectures with the pair $(l,p)$, and we sweep over the range $(l,p)\in\{1,\dots,5\}^2$.
Due to limited computational resources, we were unable to record the results for the values {$(l,p)\in\{(4,4), (4,5), (5,3), (5,4), (5,5)\}$}.
In this experiment, the parametrized entangling gates were chosen to be $CR_Y(\theta)$, but similar experiments with different controlled Pauli rotations produce comparable results.

The difference in accuracy on the training set should not come across as odd, since invariant models only express a subset of the output mappings of the non-invariant ones.
For non-invariant models, high training accuracy and low test accuracy are clear witnesses of the overfitting regime.
On the contrary, for invariant models we see that the price we pay by lowering the expressivity returns a very similar performance on both previously seen and unseen data, hinting at the sweet spot in the bias-variance trade-off.
Formally, we say the empirical generalization gap of invariant models is much smaller than that of non-invariant ones, which confirms our expectations.

As a sanity check, and to make sure we did not cherry-pick a working example, we studied the performance of other variational layouts different from \enquote{cemoid}. We did so by fixing the number of layers to $l=3$, where after encoding the data with \enquote{t} we take three consecutive random permutations of \enquote{cemoid} without contiguous repetitions. An example of such a layout is \enquote{tdomiececiodmdmoiec}, which we then repeat three times. In total, 20 different layouts were generated and simulated.
The summary of results in Fig.~\ref{fig:randomization_ttt} reveals that the test performance of the invariant models is consistently higher across different layer layouts, with few exceptions. This confirms our expectations that making learning models invariant helps more or less generically to enhance generalization performance.

\begin{figure}
    \centering
    \includegraphics{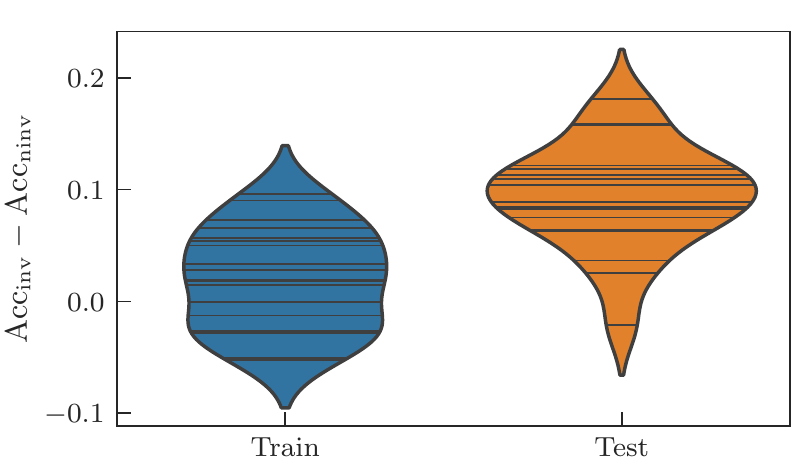}
    \caption{Violin plot of the difference in achieved mean accuracies between the invariant and non-invariant model in a sweep over different values of $l$ and $p$ as detailed in the main text. Positive values correspond to invariant models outperforming non-invariant ones. The mean was averaged over ten runs for each combination of $l$ and $p$. We see that the mean accuracy on the training data is more or less the same for both invariant and non-invariant models, but that the invariant models clearly outperform the non-invariant ones on the test data. The full training graphs can be found in the appendix in Fig.~\ref{fig:lp_sweep_full_ttt}.}
    \label{fig:lp_sweep_ttt}
\end{figure}

\begin{figure}
    \centering
    \includegraphics{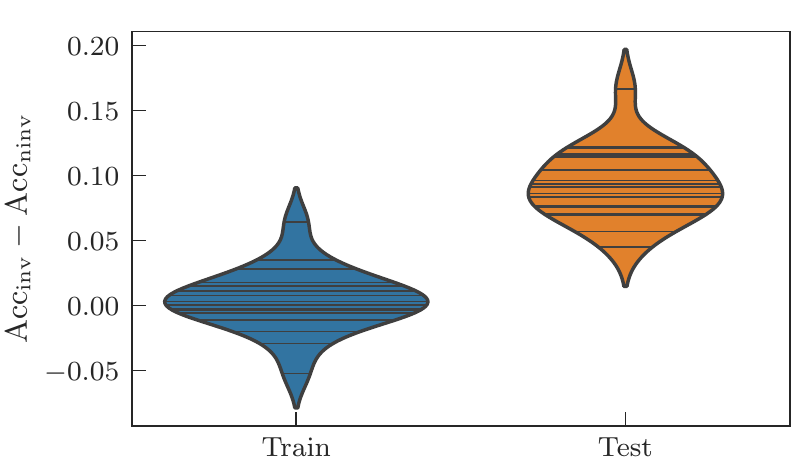}
    \caption{Violin plot of the difference in achieved accuracies between the invariant and non-invariant model for 20 random parametrizations of the trainable blocks as detailed in the main text. Positive values correspond to invariant models outperforming non-invariant ones. The values for every parametrization were averaged over ten repetitions with random parameter initializations. We see that the mean accuracy on the training data is more or less the same for both invariant and non-invariant models, but that the invariant models again clearly outperform the non-invariant ones on the test data.}
    \label{fig:randomization_ttt}
\end{figure}

\subsection*{Classifying autonomous vehicle scenarios}
The tic-tac-toe task provides an intuitive example of how we can exploit symmetry in learning, but it does not connect to a relevant real-world learning scenario. In this section, we want to outline how we can use the intuition developed in the tic-tac-toe example and apply it to a toy model of a task that is of actual relevance in the automotive industry and shows a clear way of connecting to a real-world scenario.

Autonomous driving is a future-oriented field of the automotive industry, for which two main challenges interact. First, the vehicle must be able to recognize and automatically evaluate its surroundings to deduce possible driving maneuvers to reach a goal. Second, the automated evaluation performed by the car requires verification and testing along the development cycle of the car. To meet these challenges, scenario-based development is state of the art \cite{2018arXiv180108598M}. We define a scenario as a concatenation of a scene (snapshot of the surrounding) and actions (destination goals and values). Therefore a scenario is a specific description of a driving situation, taking into account dynamic and static components which are determined by the sensor systems of the vehicle, map data, and others \cite{unifiedontology}. In the development process of the vehicle different pre-defined scenarios are classified with respect to their safety relevance according to a criticality metric. This classification will then be used to evaluate the requirements for testing, in the field or in simulation. 
The Pegasus Project \cite{2021arXiv210409097S} classifies the scenarios into different levels. In this publication, we concentrate on street level 1, which includes geometry and topology of the streets, in order to demonstrate our concept. Driving-maneuver classification is investigated using classical machine learning tools among other approaches \cite{7995704,8166755}. We again want to emphasize that, for the purpose of this paper, the original classification task is reduced to a simple version that can be simulated with the computational resources available to us. It is however clear how larger resources in terms of quantum computational hardware would allow us to address parts of the actual question as soon as it becomes available.

\subsubsection*{Dataset}
In the following, we derive driving scenarios at different street intersections, deduce the geometric symmetries and build a classification of the possible actions according to their safety relevance. Since relevant scenarios include behavior at intersections, traffic circles, and traffic jams \cite{schwab2019car}, we consider the following simplified street situations, using a $3\times3$ grid. 
Each tile of this grid can be either part of a road or be impassable, and a car is placed on one of the road tiles. With this encoding, we are able to represent different geometries from a straight road to a more complicated intersection. The symmetries of these scenarios are a subset of the Tic-Tac-To case as a scenario can always be rotated by 90°(see the right panel of Fig.~\ref{fig:pdg_difficulties}), but not necessarily mirrored in all instances as left turns have a different difficulty than right turns. 
We translate this into a $3\times3$ array $\xx$, where a road tile is encoded with 1, an impassable tile with -1, the car with -1/3, and the orientation of the car with 1/3. An example of this can be seen in the middle panel of Fig.~\ref{fig:pdg_difficulties}. 

We rate the level of difficulty $y$ with regards to a simplified criticality metric as follows: Forward: 0; Right or left curve: 0.2; Forward and right (T-crossing, intersection of three streets): 0.4; Forward and left (T-crossing, intersection of three streets): 0.6; Left and right (T-crossing, intersection of three streets): 0.8; Forward, left and right (X-crossing, intersection of four streets): 1. 
Examples of the possible difficulties are visualized in the left panel of Fig.~\ref{fig:pdg_difficulties} alongside a representation of the data encoding and a demonstration of the symmetry.

The different scenarios $\xx$ were generated by first hand placing various road layouts (T-intersection, left curve, X-intersections...), generating all their images under rotations and reflections and then putting the car on every possible road tile. Additionally, every possible orientation of a placed car is iterated. This process creates the total set of possible scenarios and their associated difficulty $\calD = \{(\xx, y)\}$.

As for the tic-tac-toe games, training and test data sets were constructed by choosing a subset of scenarios at random, with the constraint that each difficulty level is equally represented.

\begin{figure*}
    \centering
    \includegraphics{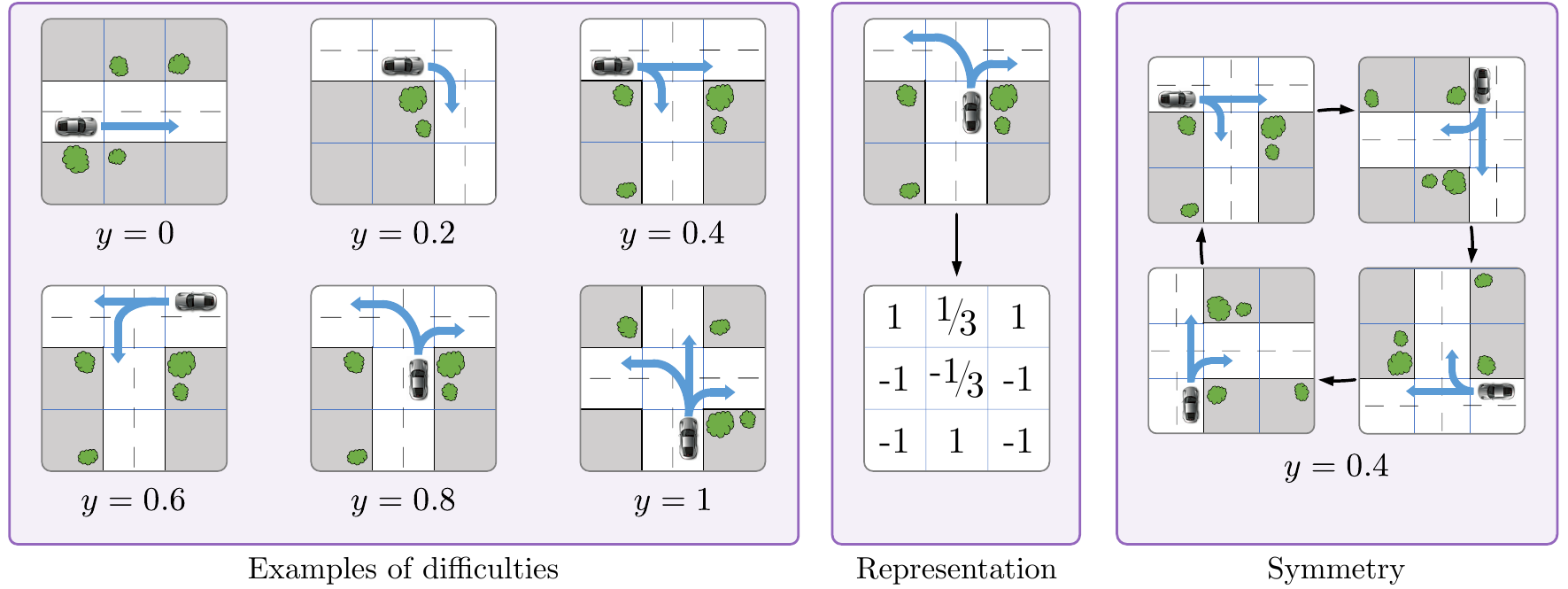}
    \caption{Examples of the various difficulties encountered in the autonomous driving toy model (left), a demonstration of the data representation (middle) and a demonstration of the rotational symmetry of the model (right).}
    \label{fig:pdg_difficulties}
\end{figure*}

\subsubsection*{Learning model}
The general circuit construction is essentially the same as in the tic-tac-toe case and follows the data re-uploading models as described in Secs.~\ref{sec:preliminaries} and \ref{sec:symmetric_structures}.
The different numerical values that represent a scene are encoded via a Pauli-X rotation on the separate qubits where the rotation angle is given by a multiple $2\pi/3$ of the array value as in the tic-tac-toe case.
The key difference to the tic-tac-toe case is the missing mirror symmetry, rendering the actual symmetry group to $\bbZ_4$ instead of $D_4$. This only affects the outer layer (o) and splits it into two distinct sub-layers: one where controlled operations are performed clockwise with shared parameters and one where controlled operations are performed counter-clockwise with shared parameters.

As in the tic-tac-toe case, all qubits are initialized in the $\ket{0}$ state and layers of data-encoding followed by parametrized gates are applied. The default topology for the parametrized blocks was again chosen to be \enquote{cemoid}, where it is understood that the outer layer (o) splits into two sub-layers as explained above. The model's prediction $\hat{y}$ of difficulty is obtained by measuring and normalizing the Pauli-Z expectation value of the middle qubit $Z_8$
\begin{equation}
    \hat{y} = \frac{\langle Z_{\text{middle}}\rangle + 1}{2}= \frac{\langle Z_8\rangle + 1}{2}.
\end{equation}
A hard prediction for a given scenario $\xx$ is obtained by rounding $\hat{y}$ to the nearest difficulty in
$\{0, 0.2, 0.4, 0.6, 0.8, 1\}$.

\subsubsection*{Training}
The learning model is trained using an $l_2$-loss function which is, for a set of games $\calD = \{ (\xx, y) \}$, given by
\begin{align}
    \calL(\calD)  = \frac{1}{|\calD|}\sum_{(\xx, y) \in \calD} ( \hat{y}(\xx) - y )^2.
\end{align}
No epochs were used in this case and optimizations were run with 30 steps. 
At each step, the gradient was computed for the same 60 scenes. As the limited grid size of $3\times3$ only allowed four difficulty 1 [X-Crossing] scenarios, random copies were created to ensure an equal distribution (ten games of each difficulty). After each step, the accuracy on the training data and the test data, which consisted of 130 unique games, was evaluated by calculating the fraction of correctly classified inputs.
The above hyper-parameters were again chosen empirically. 
The numerical experiments have been performed using the PennyLane library~\cite{bergholm2020pennylane}. For the two-qubit gates, only the $CR_Z$-gate has been implemented. All optimizations have been executed using the PyTorch L-BFGS-optimizer~\cite{NEURIPS2019_9015}.

\subsubsection*{Results}

\begin{figure}
    \centering
    \includegraphics{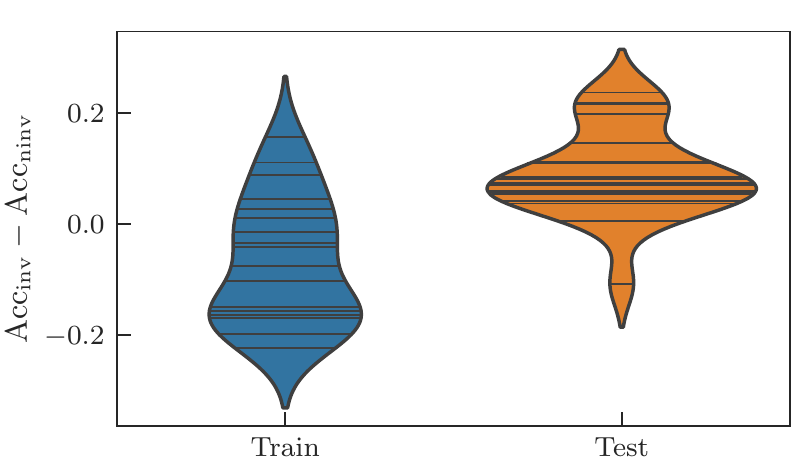}
    \caption{Violin plot of the difference in achieved mean accuracies between the invariant and non-invariant model in a sweep over different values of $l$ and $p$ as detailed in the main text. Positive values correspond to invariant models outperforming non-invariant ones. The mean was averaged over ten runs for each combination of $l$ and $p$. We see that the mean accuracy on the training data is more or less the same for both invariant and non-invariant models, but that the invariant models clearly outperform the non-invariant ones on the test data. The full training graphs can be found in the appendix in Fig.~\ref{fig:lp_sweep_full_porsche}.}
    \label{fig:lp_sweep_porsche}
\end{figure}

\begin{figure}
    \centering
    \includegraphics{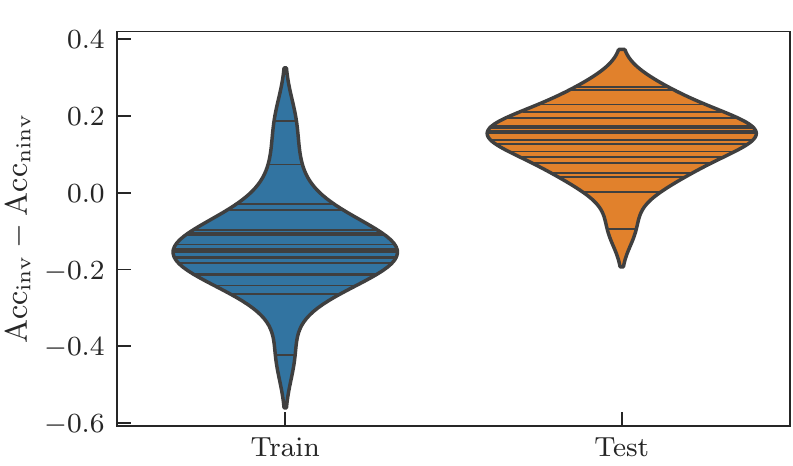}
    \caption{Violin plot of the difference in achieved accuracies between the invariant and non-invariant model for 20 random parametrizations of the trainable blocks as detailed in the main text. Positive values correspond to invariant models outperforming non-invariant ones. For every parametrization, ten random initializations were performed and then averaged. We see that the mean accuracy on the training data is more or less the same for both invariant and non-invariant models, but that the invariant models clearly outperform the non-invariant ones on the test data.}
    \label{fig:randomization_porsche}
\end{figure}

We repeat the experiments we designed for the tic-tac-toe learning task, now with this different dataset.
As expected, we observe the same general trend.
We repeat that we compare an invariant model with the non-invariant model obtained from the invariant one where the parameter sharing indicated in Fig.~\ref{fig:ttt} is not imposed. The non-invariant models therefore have more independent parameters and hence a higher expressivity.
To evaluate the performance of the models, we record the classification accuracy for the training and the test set.

We again study different hyperparameters for the architecture of the learning model. Recall that we use $l$ layers consisting of a data encoding followed by $p$ repetitions of the atom \enquote{cemoid}.
We sweep over the range $(l,p)\in\{1,\dots,5\}^2$ except for the values $(4,4), (4,5), (5,3), (5,4)$, and $(5,5)$.
As can be seen in Fig.~\ref{fig:lp_sweep_porsche}, most invariant learning models perform worse than non-invariant models on the training set and better on the test set.
A straightforward explanation would be the position the different models take in the bias-variance trade-off: overfitting regime for non-invariant models, and sweet-spot for invariant models.
The full data found in Fig.~\ref{fig:lp_sweep_full_porsche} in the appendix further supports this intuition as it shows a drop in accuracy for high $(l,p)$ values, where the expressivity of both models is found empirically to be too high for this task.
This experiment crystallizes in the statement: invariant models generalize better than non-invariant models.

We also repeat the randomization experiment over the specific spelling of the trainable parts, with the results being reported in Fig.~\ref{fig:randomization_porsche}.
We make circuits with a random layer repeated $3$ times.
Random layers start with a data encoding followed by three consecutive random permutations of \enquote{cemoid}, such that there are no repeated letters adjacent.
Again, in almost every case the invariant models outmatch the non-invariant ones.

\subsection*{Variational quantum eigensolvers}

The gate symmetrization procedure proposed in this work only necessitates the existence of a unitary symmetry representation on the level of the Hilbert space. It can therefore also be applied to ground state problems where conserved quantities of the Hamiltonian yield representations of symmetries. This case is paradigmatically treated through a variational algorithm using the variational quantum eigensolver. In the following, we will conduct some numerical experiments that showcase the advantages and disadvantages of generic symmetrization in this context.

\subsubsection*{Transverse-field Ising model}
We will first consider the \emph{transverse-field Ising model (TFIM)}~\cite{Franchini_2017} as a paradigmatic example. The TFIM Hamiltonian with periodic boundary conditions on $N$ spins is given by
\begin{align}
    {H}_{\text{TFIM}}=-\sum_{i=1}^{N}Z_i Z_{i+1} - g\sum_{i=1}^{N}X_i,
\end{align}
where we consider a transverse field strength $g>0$.

The TFIM Hamiltonian has a $\mathbb{Z}_2$ symmetry as it commutes with the parity operator
\begin{align}
    P = \prod_{i=1}^N X_i
\end{align}
The unitary representation is then given by $U_s = P^s$ for $s \in \bbZ_2$. The eigenvalues of the parity operator are either $+1$ or $-1$. For $g \rightarrow \infty$ the ground state is given by $\ket{+}^{\otimes N}$ which has a parity of $+1$. 
Using the \emph{adiabatic theorem} and the fact that for finite system size the ground state energy is not degenerate we can conclude that the parity of the ground state is the same for each $g>0$. Therefore, if we want to force our ansatz state to encode this symmetry, we require that
\begin{align}
    P \ket{\psi({\ttheta})}= + \ket{\psi({\ttheta})}.
\end{align}
Of course, it is not necessary for our ansatz to respect this property for all values of the parameters as long as it finds the correct ground state, but in many cases it can be beneficial to restrict the expressivity of the ansatz into a relevant part of the Hilbert space. If we do this, however, we have to assure that the ansatz does produce a state that is in the same symmetry sector as the true ground state. This can be assured by taking an initial state that has the correct symmetry, for example $\ket{\psi_0}=\ket{+}^{\otimes N}$, and then only performing equivariant gates that can be obtained from the symmetrization procedure of Sec.~\ref{sec:gateSymmetrization}. We note again that one has to be attentive to the fact that symmetrization has to be executed with care as it can trivialize certain generators.

In our numerical experiments, we use the QAOA ansatz~\cite{farhi2014quantum} which can be easily seen to be equivariant with respect to the parity symmetry,
\begin{align}
\begin{split}
&\ket{\psi_{\text{QAOA}}\left(\boldsymbol{\beta},\ggamma\right)} = \\
&\qquad \prod_{m=1}^{p}\prod_{i=1}^{N}e^{-i \beta_{m} X_i}\prod_{i=1}^{N}e^{-i \gamma_{m} Z_i Z_{i+1}}\ket{\psi_0}.
\end{split}
\end{align}
Here, $p$ is the number of QAOA layers.
We compare it to a variant of the QAOA ansatz which we denote as QAOA' where we add an additional mixer term involving Pauli-$Y$ rotations increases the 
expressivity as
\begin{align}
\begin{split}
&\ket{\psi_\text{QAOA'}(\aalpha, \bbeta, \ggamma)} = \\
&\qquad \prod_{m=1}^{p}\prod_{i=1}^{N}e^{-i \alpha_{m} Y_i}\prod_{i=1}^{N}e^{-i \beta_{m} X_i}\prod_{i=1}^{N}e^{-i \gamma_{m} Z_i Z_{i+1}}\ket{\psi_0},
\end{split}
\end{align}
As $P Y_i P = - Y_i$ the gate symmetrization procedure of Sec.~\ref{sec:gateSymmetrization} will remove these Pauli-$Y$ rotations and will yield the standard QAOA ansatz, which has been widely studied and shown \cite{Mbeng,Ho_2019,Wierichs_2020,Wiersema_2020,Wierichs_2020} to provide a faithful ground state approximation for the TFIM when $p\ge N / 2$, while for $p< N /2$ the QAOA ansatz can only reach a variational energy which is above the ground state energy \cite{Ho_2019} due to light-cone arguments \cite{Mbeng}. This behavior can also be observed in our experiments.

\begin{figure}
\centering
\includegraphics{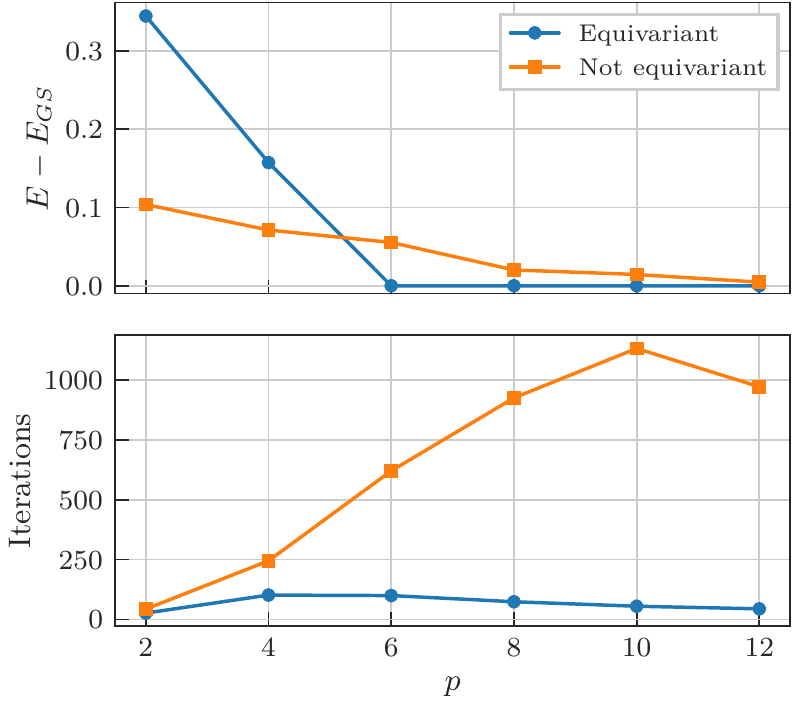}
\caption{Comparison of the equivariant QAOA ansatz with the non-equivariant QAOA' ansatz for the TFIM on $N=10$ spins with field $g=1$. For every number of layers $p$, we perform 20 experiments with random initializations of the parameters and plot the mean achieved energy (top) and the number of iterations necessary to reach it (bottom). We see that for $p\ge N/2$ the equivariant ansatz performs better in both figures of merit.}
\label{fig:TFIM_example}
\end{figure}

We compare the performance of the two ansätze using the TFIM with $N=10$ spins and different values for the number of QAOA layers $p$. We optimize using the L-BFGS optimizer until convergence is reached. We perform the optimization for 20 random initializations of the circuit parameters and average the results to arrive at the statistics shown in Fig.~\ref{fig:TFIM_example}. 
For all values of $p$, the QAOA ansatz needs fewer iterations to converge. For small $p < N/2$, we observe that the QAOA ansatz does not converge to a good approximation of the ground state which is in line with the aforementioned previous findings. In this regime, it is outperformed by the non-equivariant ansatz, highlighting the trade-off between expressivity and equivariance. If the circuit depth $p\ge {N}/{2}$ is large enough, the picture reverses and the equivariant QAOA ansatz reliably reaches the ground state whereas the non-equivariant QAOA' ansatz converges to an energy above the ground state on average. 

\subsubsection*{Heisenberg model}
Another model that has a continuous symmetry group is given by the Heisenberg model~\cite{Franchini_2017} with periodic boundary conditions on an even number $N$ of spins captured by the Hamiltonian
\begin{align}
    {H}_{\text{Heis}}=\sum_{i=1}^{N}X_i X_{i+1}+Y_i Y_{i+1}+Z_i Z_{i+1}.
\end{align}
We can understand this Hamiltonian as the alignment of two neighboring spins. Quite logically, if we rotate all spins simultaneously, the \emph{relative} alignment will not change. The symmetry group of the model is thus $\SU(2)$ represented by
\begin{align}
    U_V = V^{\otimes N} \text{ for } V \in \SU(2).
\end{align}
As in the previous example, we initialize the quantum computer in a state that lies in the right symmetry sector, which in the case of the Heisenberg model is given by all states that have zero total spin, and we choose
\begin{align}
    \ket{\psi_0} = \bigotimes_{k=1}^{N/2} \frac{1}{\sqrt{2}}(\ket{0,1} - \ket{1,0}).
\end{align}
For the Heisenberg model, we start from the non-equivariant ansatz
\begin{align}
\begin{split}
&\ket{\psi(\boldsymbol{\theta})}=\\
&\qquad \prod_{m=1}^{p}\prod_{i=1}^{N}e^{-i \alpha^{(m)} Y_i}e^{-i H_{\rm even}(\bbeta^{(m)})}e^{-i H_{\rm odd}(\ggamma^{(m)})}\ket{\psi_0},
\label{eqn:ansatzHeis}
\end{split}
\end{align}
where $H_{\rm even}(\bbeta_m)$ is an anisotropic Heisenberg Hamiltonian defined on the even lattice sites, 
\ie,
\begin{align}
\begin{split}
    &H_{\rm even}(\bbeta )= \\
    &\qquad \sum_{\substack{i=2\\i \text{ even}} }^{N} \beta_x X_{i-1} X_{i}+\beta_y Y_{i-1} Y_{i}+ \beta_z Z_{i-1} Z_{i}.
\end{split}
\end{align}
The Hamiltonian $H_{\rm odd}(\ggamma)$ is analogously defined but acts on the odd sites.
For the even and odd Hamiltonians we have three parameters each, together with the $\alpha$ parameter controlling the Pauli-$Y$ rotation we have seven parameters per layer, yielding $7p$ parameters in total.
We note that since $H_{\rm even}$ and $H_{\rm odd}$ are both linear combinations of commuting operators, we can decompose the corresponding unitaries using two-qubit gates.

The above ansatz is not equivariant, which means we need to apply the symmetrization procedure. The symmetrization procedure corresponds to a particular instance of a 2-design twirl
\begin{align}
    \calT_U[ X_{1} X_{2}] &= \int \diff \mu(V) \, V^{\otimes N} X_{1} X_{2}(V^{\dagger})^{\otimes N} \\
     &= \int \diff \mu(V) \, V^{\otimes 2} X_{1} X_{2}(V^{\dagger})^{\otimes 2}.
\end{align}
These calculations can be straightforwardly performed using the Weingarten calculus (see, \eg, Ref.~\cite{Pucha_a_2017}). The important fact for us is that any outcome of 2-design twirl will be a sum of identity and SWAP,
\begin{align}
    \calT_U[ X_{1} X_{2}] 
     &=  c_0\bbI +  c_1\SWAP\\
     &=  c_0\bbI + \frac{c_1}{2}(\bbI+X_{1} X_{2}+Y_{1} Y_{2}+Z_{1} Z_{2}),
\end{align}
where we have used the expansion of SWAP into Pauli words. Note that identity terms only generate global phases, which means that the effective symmetrized generator associated to $X_1 X_2$ can be taken to be
\begin{align}
    \calT_U[ X_{1} X_{2}] \sim X_{1} X_{2}+Y_{1} Y_{2}+Z_{1} Z_{2}.
\end{align}
This generator gives rise to the particle-number conserving \emph{Givens rotations}~\cite{givensrotations}.
The additional Pauli-$Y$ rotation in turn is trivialized by the symmetrization procedure
\begin{align}
    \calT_U[Y_{1}] &= \int \diff\mu(V) \,  V Y_{1} V^{\dagger} = \frac{\Tr(Y_{1})}{2}\bbI=0,
\end{align}
where we have exploited the formula associated to first moment operator \cite{Kliesch_2021}.

The same argument holds for the other terms, which yields the 
equivariant ansatz
\begin{align}
\begin{split}
&\ket{\psi(\boldsymbol{\theta})}=\\
&\qquad \prod_{m=1}^{p}\prod_{i=1}^{N}e^{-i \beta^{(m)} H_{\rm even}}e^{-i \gamma^{(m)} H_{\rm odd} }\ket{\psi_0},
\end{split}
\end{align}
where we have chosen
$H_{\rm even}$ and $H_{\rm odd}$ to be the isotropic variants given by $\bbeta =\ggamma = (1,1,1)$. The equivariant ansatz we obtain with the symmetrization procedure has the same form as the \emph{Hamiltonian variational ansatz} used in Ref.~\cite{Wiersema_2020,Ho_2019}. Note that the equivariant ansatz has $2p$ parameters in total compared to the $7p$ of the non-symmetric ansatz.

\begin{figure}
\centering
\includegraphics{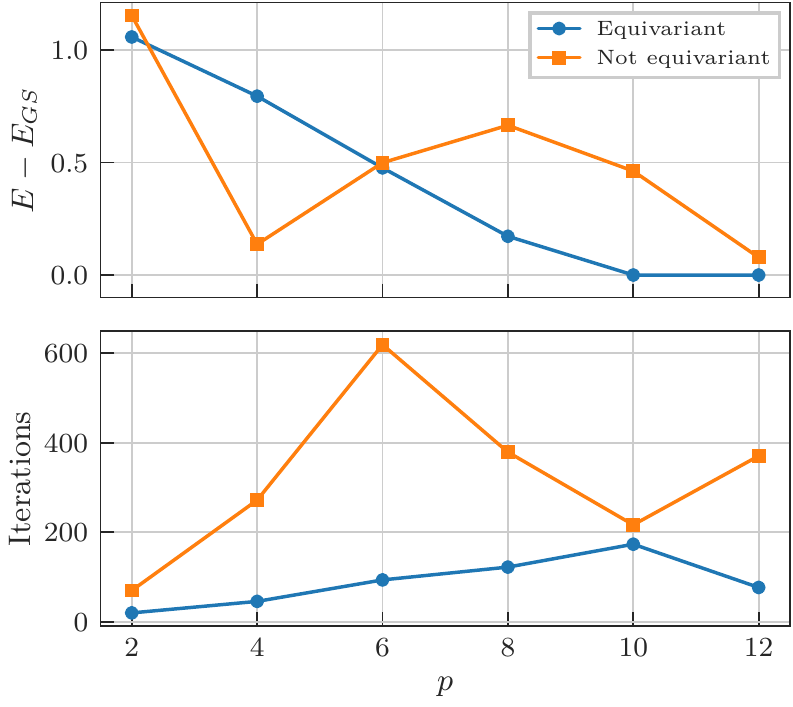}
\caption{
Comparison of the equivariant and the non-equivariant ansätze for the Heisenberg model on $N=10$ spins. For every number of layers $p$, we perform 20 experiments with random initializations of the parameters and plot the mean achieved energy (top) and the number of iterations necessary to reach it (bottom). We see that, for large enough $p$, the equivariant ansatz performs better in both figures of merit. For $p$ small, however, the approximation to the ground state is worse.}
\label{fig:Heis_example}
\end{figure}

We compare the performance of the two ansätze using the Heisenberg model with $N=10$ spins and different values for the number of layers $p$. As in the TFIM numerics, we optimize using the L-BFGS optimizer until convergence is reached and average the outcomes over 20 random initializations of the circuit parameters to arrive at the data shown in Fig.~\ref{fig:Heis_example}.

We see that the equivariant ansatz reaches its minimum energy faster across all depths. However, only for a large enough depth $p$ it can outperform the non-equivariant ansatz in terms of the energy expectation value. This nicely shows the trade-off between expressivity and symmetrization, as for small $p$ the increased expressivity of the non-equivariant ansatz is at an advantage over the equivariance of the symmetrized ansatz. However, as soon as the equivariant ansatz becomes sufficiently expressive, the picture reverses.

\begin{figure}
\centering
\includegraphics{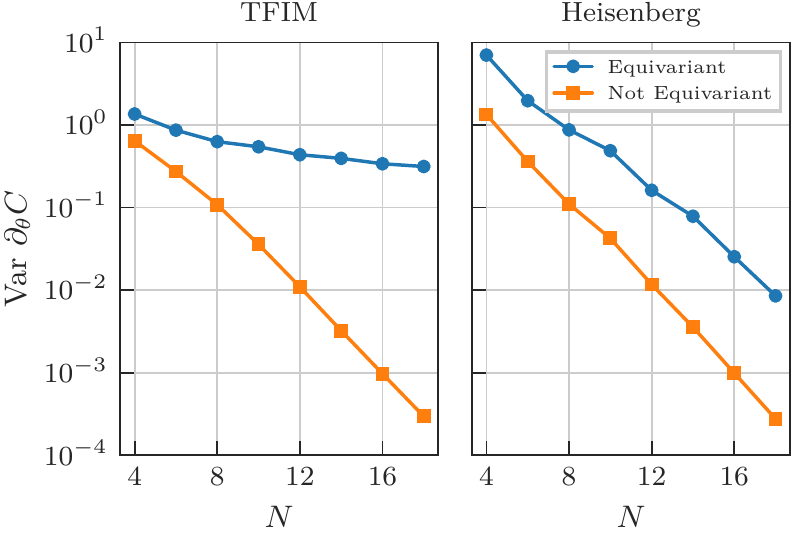}
\caption{Comparison of the barren plateaus phenomenon for the TFIM and the Heisenberg model considered in Figs.~\ref{fig:TFIM_example} and \ref{fig:Heis_example}. The derivative is computed with respect to the rotation
angle of the first qubit in the first layer using as observable $Z_1 Z_2$ with $p=80$ for the TFIM and $p=40$ for the Heisenberg model. The variance is calculated using 1000 randomly sampled parameters from a uniform distribution $[0,2\pi]$.}
\label{fig:BP_TFIM_Heis}
\end{figure}

\subsubsection*{Barren plateaus}
In a further numerical experiment that uses the same setups as above -- TFIM and Heisenberg model -- we analyze the influence of our symmetrization procedure on the barren plateaus phenomenon~\cite{Holmes_2022,McClean_2018}. Symmetrization reduces the expressivity of the ansatz by reducing the number of free parameters and additionally alters the dynamical Lie algebra associated with the ansatz generators, which is known to be intimately related to barren plateaus~\cite{BP_OptControl}. 
We therefore expect that the equivariant ansatz will have larger gradients in the applications studied here. 
In Fig.~\ref{fig:BP_TFIM_Heis}, we can indeed observe that the barren plateaus phenomenon is mitigated by the symmetrization procedure. 
For the transverse-field Ising model, the gradient decay of the equivariant ansatz is consistent with the polynomial scaling predicted by the reduced dimension of the dynamical Lie algebra found in Ref.~\cite{BP_OptControl}, whereas the non-equivariant ansatz shows an exponential decay. In the case of the Heisenberg model, the equivariant ansatz shows an exponential decay like its non-equivariant counterpart, but the gradient magnitude is enhanced and we see signs of a slightly slower decay exponent.

\subsubsection*{Tic-tac-toe LTFIM}
The last model we analyze is a variant of the \emph{longitudinal-transverse-field Ising model (LTFIM)} defined on a two-dimensional lattice of nine sites.
This model is constructed such that it has the same geometric symmetry as the tic-tac-toe example that we encountered above. We can write the Hamiltonian as
\begin{align}
    {H}_{\text{TTT}}=H_{ZZ} +\sum_{i=1}^{9}X_i+\sum_{i=1}^{9}Z_i,
\end{align}
where $H_{ZZ}$ determines the interaction of the spins. As shown in Fig.~\ref{fig:TTT_VQE_graph}, we distinguish three families of edges through their interaction strength: those in the contour of the lattice (triple lines in the figure), those along the diagonals (single lines), and those at the inside (double lines).
\begin{figure}
    \centering
    \includegraphics[width=.35\textwidth]{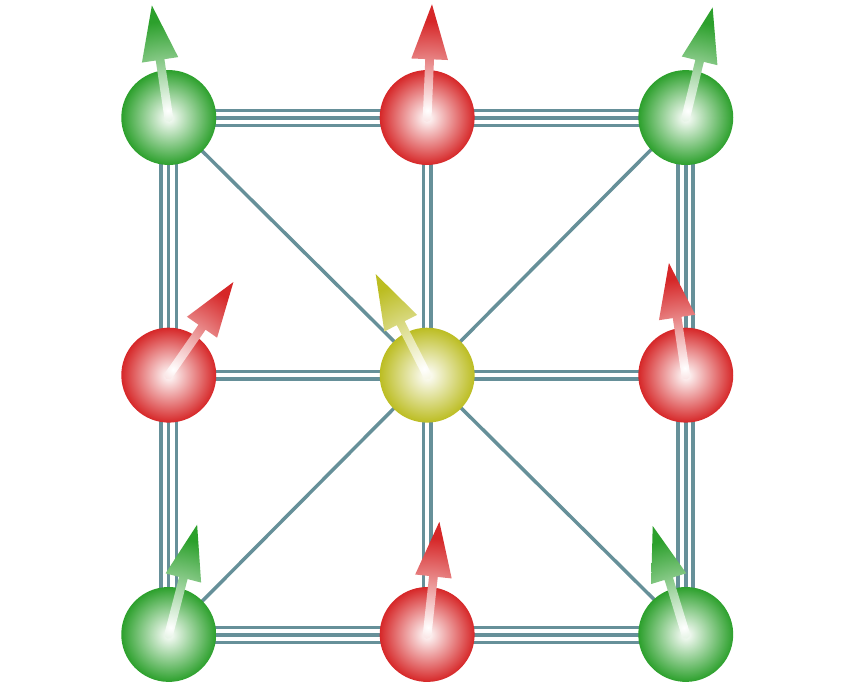}
    \caption{Graph for the LTFIM lattice model. The different colors used for the vertices emphasize the fact that sites that share the same color also share the same parameter in the one-qubit gates, and similarly for the different thicknesses of edges used for the two-qubit gates.}
    \label{fig:TTT_VQE_graph}
\end{figure}
So we choose $H_{ZZ}$ to be
\begin{align}
    H_{ZZ} = H_{\text{cont}} + \frac{1}{2} H_{\text{inside}} + \frac{3}{2} H_{\text{diag}},
\end{align}
where every term is a sum of the $ZZ$ interactions for the given set of edges 
\begin{align}
    H_{A}=\sum_{(i_1,i_2) \in A}Z_{i_1}Z_{i_2}.
\end{align}
This model has clear geometric symmetries which are the same we have used in the tic-tac-toe learning task. In Fig.~\ref{fig:TTT_VQE_graph}, we have represented the qubits that are equivalent under symmetry transformation with the same colors. Note that we have three independent families of qubits and also of edges. 

For this model, we use an ansatz which is composed of $p$ repeated layers of entangling $ZZ$ rotations for each edge in Fig.~\ref{fig:TTT_VQE_graph}, Pauli-$X$ and Pauli-$Z$ rotation on each qubit. The symmetrization procedure will enforce that the gate parameters of the gates are the same for equivalence classes of edges for the entangling gates and for equivalence classes of spins for the single-qubit gates.
Thus in the non-symmetric case, we have $(16+9+9)p$ free parameters in total, while in the symmetric one we have a reduction to $(3+3+3)p$ free parameters. The invariant state vector $ {\ket{+}^{\otimes N}}$ is used as the initial state.

We numerically tested the performance of the two ansätze in a manner similar to that of previous models as shown in Fig.~\ref{fig:TTTLTFIM_plot}. We can conclude that in this model, the equivariant ansatz performs better on average than the non-equivariant one, both in achieved energy and number of needed iterations for the circuit depths we tested. However, we see that there is an uptick in the energy achieved by the equivariant ansatz for larger layers which probably indicates that this advantage is not stable against increasing the depth of the circuit. This underscores the fact that symmetrization is usually expected to be helpful, but it still is a tool that needs to be treated with care as it is no panacea.

\begin{figure}
\centering
\includegraphics{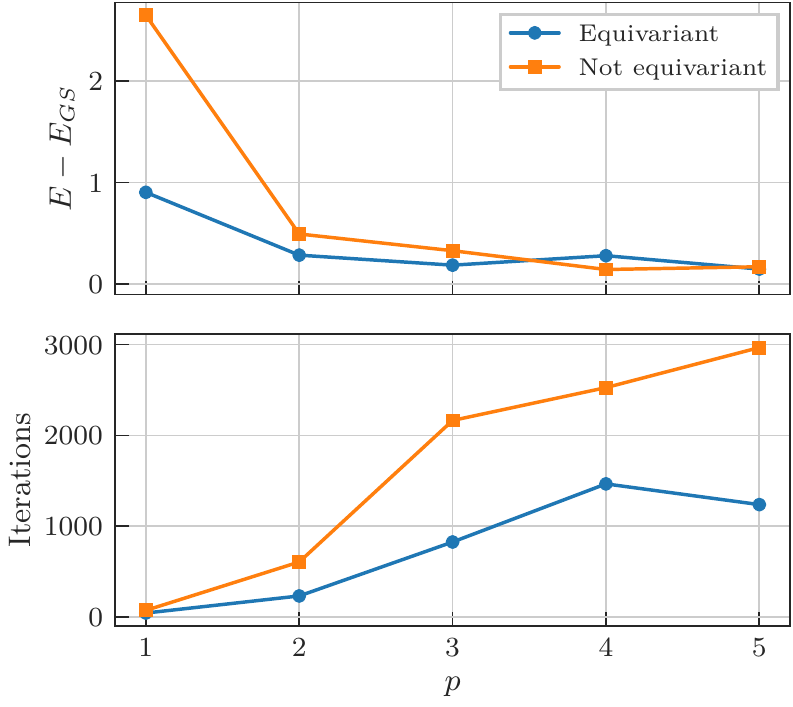}
\caption{
Comparison of the equivariant and the non-equivariant ansätze for the TTT-LTFIM model. For every number of layers $p$, we perform ten experiments with random initializations of the parameters and plot the mean achieved energy (left) and the number of iterations necessary to reach it (right). For smaller depth of the circuit, the equivariant ansatz performs better than the non-equivariant one. This advantage seems to not persist to higher depths.}
\label{fig:TTTLTFIM_plot}
\end{figure}

\subsubsection*{Discussion}

In the examples shown above, we examined different figures of merit, such as the final energy, the required iterations for the optimizer, and barren plateaus, to see if the use of an equivariant ansatz can yield better performance. 
We verified that in some cases it actually does, especially in the number of iterations necessary to reach a solution. We attribute this to the reduction in the number of free parameters of the model, which simplifies the underlying optimization problem. However, the picture is less clear when looking at the achievable minimal energy. In our numerics, we observed the trade-off between the gain in equivariance of the ansatz on one hand and the reduction in expressivity on the other hand. Especially at small circuit depths, it can be advantageous to have a non-equivariant ansatz that can explore larger portions of the Hilbert space, whereas at larger circuit depths the restriction of the expressivity to the relevant subspace of the Hilbert space can help.

This further motivates the use of carefully engineered symmetry-breaking in conjunction with equivariant ansätze, especially at small circuit depths as was explored in Ref.~\cite{ParkBrokenSymm}. A theoretical underpinning to this reasoning was given in Ref.~\cite{Bravyi_2020}, where it was shown that an ansatz that preserves parity symmetry, such as QAOA with a shallow circuit, may be an obstacle for the preparation of the ground state of certain Hamiltonians. It is therefore of essence to further clarify the conditions under which an equivariant ansatz can lead to better performance in ground state preparation problems.

\section{Summary and outlook} 

In this work, we have laid the foundations for the construction of variational quantum learning models that make predictions invariant under a symmetry transformation of the data. 
We have shown that the embedding of the data into the Hilbert space of the quantum system plays a crucial role and that it has to be chosen suitably to induce a meaningful unitary representation of the data symmetry on the level of the Hilbert space. We 
have provided embeddings that allow doing this for the most important symmetries encountered in contemporary learning scenarios, namely permutation-type symmetry, and also 
embeddings that induce a meaningful representation of the Lie group $\O(3)$ of orthogonal spatial transformations. 

With the unitary representation of the symmetry at hand, we show how elementary results from representation theory can be used to construct equivariant gatesets from standard gatesets used for the construction of variational quantum learning models. Armed with these gatesets, the construction of invariant re-uploading models becomes possible: alternating layers of equivariant data embeddings and equivariant trainable blocks applied to a symmetry-invariant initial state yield invariant predictions when evaluated on a symmetry-invariant observable. In this way, we give both a blueprint and tools for the construction of invariant variational quantum machine learning models. Moreover, using equivariant gatesets is a much needed building block that allows us to inform decisions about how to construct quantum learning models which is a first step to a solution of the quantum model selection problem. To increase the applicability of these tools, we also outlined the pitfalls that one should avoid when using them.

The numerical experiments we have conducted on the tic-tac-toe toy example and the autonomous vehicles toy problem have confirmed our expectation that invariant learning models indeed have better generalization capabilities, as their expressivity is constrained to a set of output functions that include some knowledge about the underlying data. We have also ensured that we did not cherry-pick the results by comparing random invariant model architectures with their non-invariant counterparts where we have observed the same results.

As the existence of a unitary symmetry representation is sufficient for the construction of equivariant gatesets, these can also be applied to problems outside the realm of variational quantum machine learning. We paradigmatically explore this possibility by comparing equivariant with non-equivariant ansatz circuits for ground state type problems where conserved quantities of the Hamiltonian give rise to a symmetry. Our analysis on the transverse-field Ising model, the Heisenberg model and a variant of the longitudinal-transverse-field Ising model with geometric symmetry allow us to conclude that equivariant ansätze can be helpful in this application as well. They often allow to reach a better energy estimate in fewer iterations and help to alleviate the problem of barren plateaus. Nevertheless, the picture is not as clear as in the learning scenario which we discussed in detail.

We have studied the question of equivariant quantum embeddings for the case of $\O(3)$. But there is no fundamental reason why other Lie groups that constitute a data symmetry should not also be amenable to equivariant quantum embeddings. While this is a somewhat exotic direction from the perspective of real-world learning tasks, we expect that future research into the interplay between data embeddings in the quantum system's Lie algebra and unitary representations of the symmetry on the level of the Hilbert space will allow us to learn much more about the inner workings of variational quantum learning models.

Another interesting direction for future research is to prove rigorous results about the generalization capabilities of invariant quantum machine learning models, with the aim to find quantitative expressions of how much exploiting a particular symmetry helps to better solve the learning task at hand.
One should also note that we have only run numerical experiments on maximally nine qubits for the learning problems, as our computational resources have been limited. In the future, it will be interesting to compare invariant variational quantum learning models to classical models on more realistic learning tasks.

Finally, it is our hope that this work stimulates further research efforts aimed at exploiting symmetry in variational quantum algorithms in the context of quantum machine learning and beyond.

\section*{Acknowledgments}
The authors would like to thank Hakop Pashayan and Regina Kirschner for insightful discussions and Matthias Caro, Simon Marshall, Sergi Masot, Franz Schreiber, Adri\'an P\'erez-Salinas, and Andrea Skolik for useful comments on an earlier version of this manuscript.
F.~A.\ acknowledges the support of the Alexander von Humboldt Foundation. We thank the BMBF (Hybrid), the BMWK (PlanQK), the QuantERA (HQCC), the Munich Quantum Valley (K8) and the Einstein Foundation (Einstein Research Unit on Quantum Devices) for their support. We also thank the Porsche Digital GmbH. 

\section*{Author contributions}
J.~J.~M.\ conceived and supervised the project. J.~J.~M.\, E.~G.~F.\ and F.~A.\ developed the theory for equivariant embeddings and gate symmetrization. M.~M.\ and F.~A.\ conducted the quantum machine learning experiments based on ideas by F.~A., M.~M., A.~W. A.~A.~M.\ conceived and conducted the ground state search experiments. A.~W.\ and J.~E.\ supported research and development. All authors contributed to the writing of the manuscript under the lead of J.~J.~M.

\section*{Data availability}
Code for implementations and data of the numerical experiments conducted in this work will be made available upon reasonable request.

\section*{Note added}
During the finalization of this manuscript, Ref.~\cite{larocca2022group-invariant} was published which has conceptual overlap with our work, especially in the definition of invariant and equivariant models. However, the authors of Ref.~\cite{larocca2022group-invariant} consider a different setting, where learning is performed on quantum states rather than classical data, and focus on binary classification. Our work discusses how and when symmetries on the level of the quantum system arise when dealing with classical data symmetries and how we can then exploit these to construct invariant learning models. As our techniques can also be readily applied to the setting of Ref.~\cite{larocca2022group-invariant}, we give a way of constructing equivariant architectures as left as a future direction in their work.

\bibliography{main}

\onecolumngrid
\clearpage
\appendix
\section{Additional figures}
\FloatBarrier
\begin{figure}
    \centering
    \includegraphics{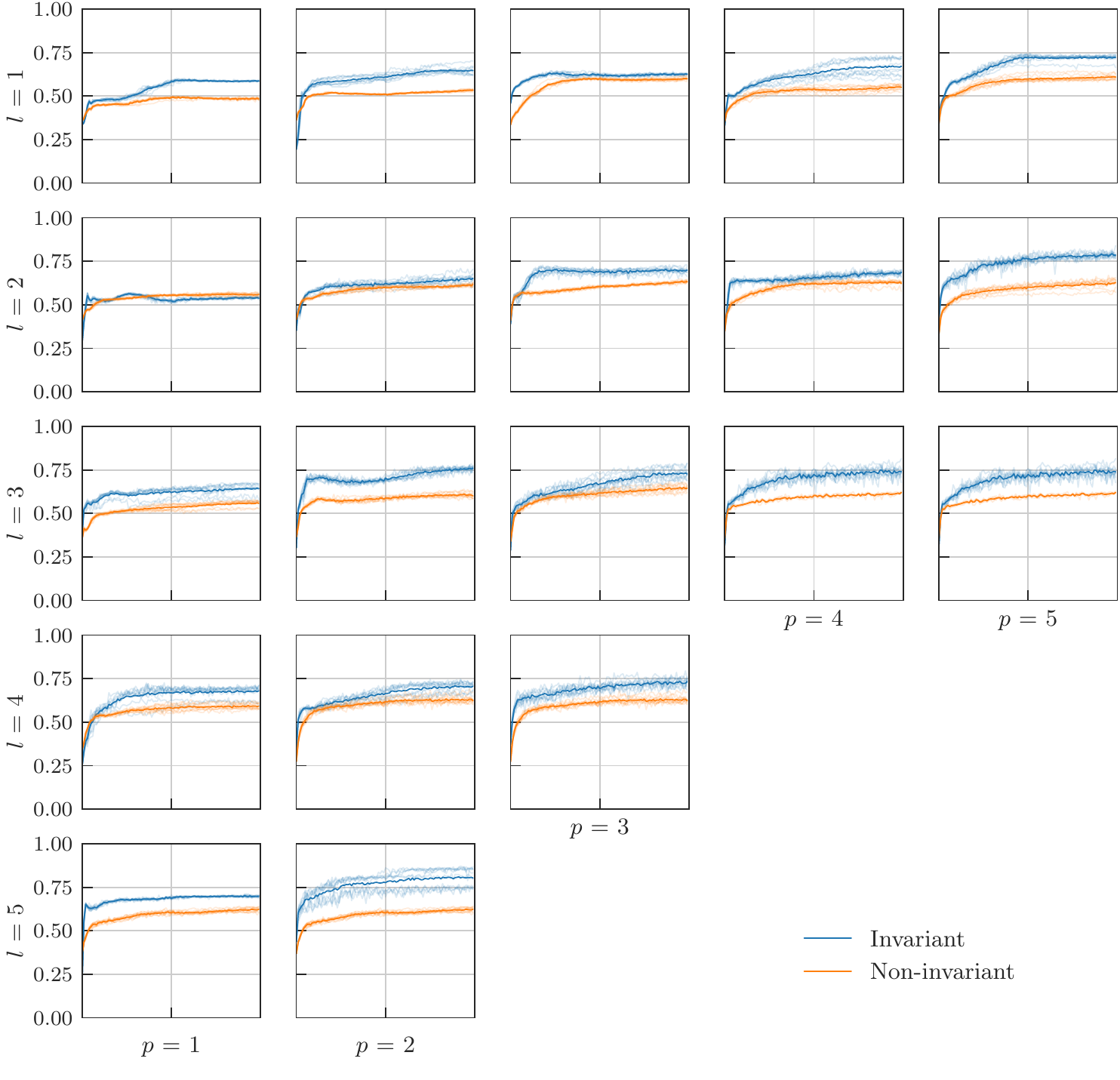}
    \caption{Plots of the sweep over different values of $l$ and $p$ for the classification of tic-tac-toe games arranged in a grid. Shown on the y-axis are the mean test accuracies for both the invariant and the non-invariant models, where the average over the ten runs is plotted in a darker color. Models are composed of $l$ layers, each of which contains one data encoding followed by $p$ times the standard layout \enquote{cemoid}. Optimizations were performed for $100$ epochs with $30$ steps each. At every step, the gradient was computed on $15$ games with a total training set size of $15 \times 30 = 450$. When an epoch ended, the training set was shuffled and the accuracy was evaluated on a test set of $600$ random games, which was fixed throughout a simulation.
    We can see that the invariant model clearly outperforms the non-invariant model. Not simulated were circuits for the values $(4, 4), (4, 5), (5, 3), (5, 4)$, and $(5,5)$ due to their growing computational demand. Further details can be found in the main text.}
    \label{fig:lp_sweep_full_ttt}
\end{figure}

\begin{figure}
    \centering
    \includegraphics{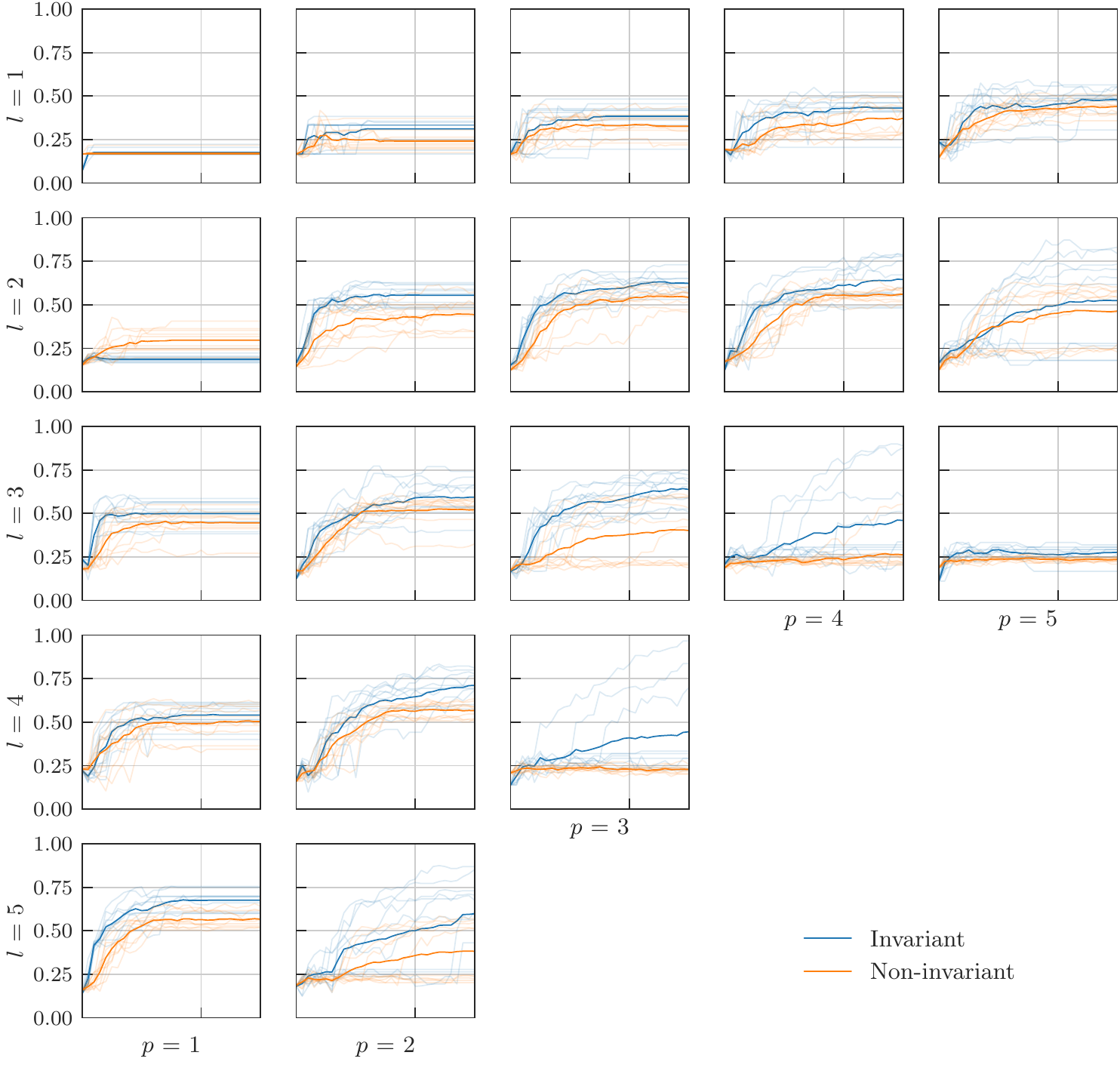}
    \caption{Plots of the sweep over different values of $l$ and $p$ for the classification of autonomous vehicle scenarios arranged in a grid. Shown on the y-axis are the mean test accuracies for both the invariant and the non-invariant models, where the average over the ten runs is plotted in a darker color. Models are composed of $l$ layers, each of which contains one data encoding followed by $p$ times the standard layout \enquote{cemoid}. Optimizations were performed for $30$ steps with no epochs on the same training set of $60$ situations that were chosen randomly at the start of each simulation. The accuracies were evaluated after each step on a test set of $130$ random but unique situations, which was fixed throughout a simulation.
    We can see that the invariant model clearly outperforms the non-invariant model. Additionally, it can be explicitly observed that for high $(l, p)$ values, the expressivity is empirically too high as accuracies drop sharply for both models. Not simulated were circuits for the values $(4, 4), (4, 5), (5, 3), (5, 4)$, and $(5,5)$ due to their growing computational complexity. Further details can be found in the main text.}
    \label{fig:lp_sweep_full_porsche}
\end{figure}

\end{document}